\newcommand{\id}{\texttt{id }}
\newtheorem*{remark}{Remark}
\renewcommand{\id}{\texttt{id}}
\tikzset{
	treenode/.style = {shape=rectangle, rounded corners,
		draw, align=center,
		top color=white,
		bottom color=blue!20},
	root/.style     = {treenode, font=\Large,
		bottom color=blue!20},
	env/.style      = {treenode, font=\Large},
	dummy/.style    = {circle, font=\Large,draw,
		bottom color=blue!20,top color=white}
}
\newtheorem{theorem}{Theorem}[section]
\newtheorem{lemma}[theorem]{Lemma}
\theoremstyle{remark}
\title{Distributional Results for Model-Based Intrinsic Dimension Estimators}
\author{
	Francesco Denti\footnote{Department of Statistics, University of California, Irvine, United States}\\
	\texttt{fdenti@uci.edu}
	\and
	Diego Doimo\footnote{SISSA, Via Bonomea 265, Trieste, Italy} \\
	\texttt{ddoimo@sissa.it} 
	\and
	Alessandro Laio\footnotemark[2]\\
	\texttt{laio@sissa.it}
	\and
	Antonietta Mira\footnote{Faculty of Economics,
		Universit\`a della Svizzera italiana, Lugano, Switzerland, and
		Insubria University, Varese, Italy}\\
	\texttt{antonietta.mira@usi.ch}
}
\date{}
\begin{document}
	\maketitle
	\begin{abstract}
		Modern datasets are characterized by a large number of features that describe complex dependency structures. To deal with this type of data, dimensionality reduction techniques are essential. Numerous dimensionality reduction methods rely on the concept of intrinsic dimension, a measure of the complexity of the dataset. In this article, we first review the \texttt{TWO-NN} model, a likelihood-based intrinsic dimension estimator recently introduced by \cite{Facco}. Specifically, the \texttt{TWO-NN} estimator is based on the statistical properties of the ratio of the distances between a point and its first two nearest neighbors.
		We extend the \texttt{TWO-NN} theoretical framework by providing novel distributional results of consecutive and generic ratios of distances. These distributional results are then employed to derive intrinsic dimension estimators, called \texttt{Cride} and \texttt{Gride}. These novel estimators are more robust to noisy measurements than the \texttt{TWO-NN} and allow the study of the evolution of the intrinsic dimension as a function of the scale of the distances. We discuss the properties of the different estimators with the help of simulation scenarios.

	\end{abstract}

	\section{Introduction}
	\label{sec:intro}
	
	In recent years, we have witnessed an unimaginable growth in data production. From personalized medicine to finance,  datasets characterized by large dimensions are ubiquitous in modern data analyses.
	The availability of these high-dimensional datasets poses novel and engaging challenges for the statistical community, called to devise new techniques to extract meaningful information from the data in a reasonable amount of time. Fortunately, data that are contained in high-dimensional embeddings can often be described by a handful of variables: 
	a subset of the original ones or a combinations - not necessarily linear - thereof.
	In other words, one can effectively map the features of a dataset onto spaces of much lower dimension, such as nonlinear manifolds \citep{Levina}. 
	Estimating the dimensionality of these latent manifolds is of paramount importance. 
	We will call this quantity of interest the intrinsic dimension (\texttt{id} from now on) of a dataset, i.e., the number of relevant coordinates needed to accurately describe the data-generating process.\\
	Many other definitions of \texttt{id} have been proposed in the literature. For example, \citet{Fukanaga1972} described the \texttt{id} as the minimum number of parameters needed to accurately describe the important characteristics of a system. 
	For \citet{Bishop95}, the \texttt{id} is the dimension of the subspace where the data lie entirely, without information loss. Alternatively, \citet{Campadelli2015} provided another useful interpretation of the \texttt{id} within the pattern recognition literature. In this case, a set of points is viewed as a sample uniformly generated from a distribution over an unknown smooth (or locally smooth) manifold structure (its support),
	eventually embedded in a higher-dimensional space through a non-linear smooth mapping. Then, the \texttt{id} to be estimated is the manifold's topological dimension.\\ 
	All these definitions are useful to delineate different aspects of the multi-faceted concept that is the \texttt{id}.\\
	
	The literature regarding statistical methods for dimensionality reduction and \texttt{id} estimation is extremely vast and heterogeneous. We refer to \citet{tesiFacco,Campadelli2015} for comprehensive reviews. Generally, methods for the estimation of the $\id$ can be divided into two main families: \emph{projective methods} and \emph{geometric methods}.
	On the one hand, \emph{projective methods} estimate the low-dimensional embedding of interest through transformations of the data, which can be linear or nonlinear. Famous members of this family are the traditional Principal Component Analysis (PCA) \citep{Hotelling1933} and the Multidimensional Scaling \citep{Pigden1988}.
	In both cases, the goal is to find the best linear projection of the data, with respect to some pre-specified loss function, onto a lower dimensional space. However, many data manifolds cannot be described by a simple linear combination of the features in a dataset.
	Thus, several authors focused on the development of nonlinear algorithms such as Local Linear Embedding \citep{Roweis2000}, the Isomap \citep{Tenn}, and others \citep{Belkin2002,Donoho2003}. See also \citet{Jollife2016} and the references therein.\\
	On the other hand, \emph{geometric methods} rely on the topology of a dataset, exploiting the properties of the distances between data points. Within this family, we can distinguish between \emph{fractal methods}, \emph{graphical methods}, and \emph{methods based on nearest neighbor distances}.\\
	The first class focuses on how the number of neighbors of a given point increases while increasing the dimension of its neighborhood. The concept at the basis of all fractal methods is that the volume of a $d$-dimensional ball of radius $r$ scales as $r^{d}$ \citep{Falconer2003}. Thus, these estimators are based on the idea of counting the number of observations in a neighborhood of radius $r$ to estimate its rate of growth $\hat{r}$. Since the estimated growth is assumed to resemble the theoretical growth rate $r^{d},$ these methods exploit the connection between the empirical $\hat{r}$ and $r^d$ to estimate the parameter $d$, regarded as the fractal dimension of the data.\\ 
	Theory and algorithms for \emph{graphs} can also be exploited to estimate the \texttt{id} of datasets. In particular, graph theory is especially useful when dealing with non-linear subspaces. A graph obtained by linking points close to each other can provide valuable insights regarding the geometry of the latent manifold where the data are supposed to lie and the \emph{geodesic distance} represents a reliable distance measure in this context. This type of distance is ``shape-aware'', i.e. capable of measuring the length of paths contained in the manifold and to analyze the scaling behavior of the distance probability distribution at intermediate length-scales. 
	For example, to capture the non-linearity of the subspace, i.e. to perform manifold learning,
	\citet{Granata2016} provided a method to estimate a global \texttt{id} starting from the distribution of the geodesic distance. \citet{Costa2004} recovered the geodesic minimum spanning tree and the \texttt{id} $d$ of the dataset via a linking equation. These are example of how exploiting a graph structure, built connecting neighboring points, allows to uncover involved topological properties impossible to recover within the classical euclidean framework. \\
	\emph{Nearest neighbors (NNs) methods} rely on the assumption that points close to each other are uniformly drawn from $d$-dimensional balls (hyper-spheres). More formally, consider a generic data point $\bm{x}$ and denote with $\mathcal{B}_{d}(\bm{x}, r)$ a hyper-sphere, characterized by small radius $r \in \mathbb{R}^{+}$, 
	centered in point $\bm{x}$. If $\rho(\bm{x})$ is a density distribution defined on $\mathbb{R}^{d}$, the following approximation holds: $\frac{k}{n} \approx \rho(\bm{x})\, \omega_{d}\, r^{d} $, where $k$ is the number of NNs of $\bm{x}$ within the hyper-sphere $\mathcal{B}_d(\bm{x}, r)$, while $\omega_{d}$ is the volume of the $d$-dimensional unit hyper-sphere in $\mathbb{R}^{d}$ \citep{Pettis1979}. Intuitively this tells that the proportion of points of a given sample which fall into the ball $\mathcal{B}(\bm{x}, r)$ is approximately $\rho(\bm{x})$ times the volume of the ball. If the density is constant, one can estimate the \texttt{id} using only the average distances from a point's $k$ NNs. \\
	From a different perspective, some authors adopted modeling frameworks for manifold learning and \texttt{id} estimation that are based on a probabilistic distribution for the distances between data points. \citet{Amsaleg2015}, exploiting results from \citet{Houle2013}, suggested modeling a distance random variable using a Generalized Pareto Distribution \citep{Coles2008} since they showed that the \texttt{id} can be recovered, asymptotically, as a function of its parameter. 
	Additionally, some model-based methods to explore the topology of datasets have recently been developed, pioneered by the likelihood approach discussed in \citet{Levina}. Recently, \citet{Duan} proposed to model the pairwise distances among distributions to coherently estimate a 
	clustering structure in a Bayesian setting. One drawback of this method is that it involves the computation of each pairwise distance among the data points, which can be extremely computationally expensive. \citet{Mukhopadhyay2019} used Fisher-Gaussian kernels to estimate densities of data embedded in non-linear subspaces. \citet{Li2017} proposed to learn the structure of latent manifolds by approximating them with spherelets instead of locally linear approximation, developing a spherical version of PCA. In the same spirit, \citet{Li2019} applied this idea to the classification of data lying on complex, non-linear, overlapping and intersecting supports. Similarly, \citet{Li2019a} proposed to use the spherical PCA to estimate a geodesic distance matrix between the data, which takes into account the structure of the latent embedding manifolds and create a spherical version of the $k$-medoids algorithm \citep{Kaufman1987}.\\ 
	
	In this paper, we introduce and discuss novel likelihood-based approaches for the $\id$ estimation that stem from the geometrical properties of NNs. 
	Specifically, we build on the work of \citet{Facco}, where the authors proposed the two nearest neighbors (\texttt{TWO-NN}) estimator. The \texttt{TWO-NN} is a model-based \texttt{id} estimator derived from the properties of a Poisson point process, whose realizations take place in a manifold of dimension $d$. 
	They proved that the ratio of distances between the second and first NNs of a given point is Pareto distributed with unitary scale parameter and shape parameter precisely equal to $d$. Their result holds under mild assumptions on the data-generating process, that we will discuss in detail in the following. Therefore, they suggested estimating the \texttt{id} by fitting a Pareto distribution to a proper transformation of the data. \\
	
	Our contribution is twofold. First, while introducing the modeling setting, we revisit the main results presented in \citet{Facco}. In particular, we provide alternative proofs for the 
	validity of the \texttt{TWO-NN} estimator 
	by using standard properties of random variable distributions. 
	Moreover, we also present the maximum likelihood and Bayesian counterparts of the \texttt{TWO-NN} estimator.\\ 
	Second, we extend the \texttt{TWO-NN} theoretical framework by deriving closed-form distributions for the product of consecutive ratios of distances and, more importantly, for the ratio of distances between NNs of generic order. In addition to the contribution to the Poisson process theory, our extensions have relevant practical consequences. Indeed, considering ratios beyond the second order allows the investigation of the $\id$ evolution as a function of the distances between NNs. In other words, we employ our modeling extensions to study how the estimate is sensitive to scale effects. Considering the evolution of the estimates as the scale changes allows us to obtain an $\id$ estimator that is more robust to noise in the data.  With the help of a simulation study, we discuss how these results can be employed to enhance the estimation of the \texttt{id}.\\
	
	This paper is organized as follows. Section \ref{sec:modelingbackground} presents the theoretical framework developed by \citet{Facco} from a statistical point of view. In Section \ref{sec:extension}, we contribute to the Poisson point process theory providing closed-form distributions for functions of distances between a point and its NNs. We exploit these novel results to devise estimators for the $\id$ of a dataset.
	Section \ref{sec::appl} presents numerical experiments devised to illustrate the behavior of the different estimators. Finally, in Section \ref{Sec::Concl} we discuss possible future directions and conclude.
	
	\section{The \texttt{TWO-NN} modeling background, revisited}
	\label{sec:modelingbackground}
	
	The two nearest neighbors (\texttt{TWO-NN}) model, proposed by \citet{Facco}, represents the foundation upon which we will build our contributions. 
	First, we discuss the theoretical background needed to derive the \texttt{TWO-NN} model. 
	Along with our exposition, we provide alternative, immediate proofs to the main theoretical results by exploiting the properties of known random variable distributions.\\
	
	Consider a dataset $\bm{X}=\{\bm{x}_i\}_{i=1}^n$ composed of $n$ observations measured over $D$ distinct features, i.e., $\bm{x}_i\in \mathbb{R}^D$, for $i=1,\ldots,n$. Denote with $\Delta:\mathbb{R}^D\times \mathbb{R}^D \rightarrow\mathbb{R}^+$ a generic distance function between the elements of $\mathbb{R}^D$.
	We assume that the dataset $\bm{X}$ is a particular realization of a Poisson point process characterized by density function (i.e., normalized intensity function) $\rho\left(\bm{x}\right)$. 
	We also suppose that the density of the considered stochastic process has its support on a manifold of unknown intrinsic dimension $d\leq D$. We expect, generally, that $d <<D$. \\
	For any fixed point $\bm{x}_i$, we can sort the remaining $n-1$ observations according to their distance from $\bm{x}_i$ by increasing order. Let us denote with $\bm{x}_{(i,l)}$ the $l$-th NN of $\bm{x}_i$ and with $r_{i,l}=\Delta(\bm{x}_{i},\bm{x}_{(i,l)})$ their distance, with $l=1,\ldots, n-1$. For practical purposes, we define $\bm{x}_{i,0}\equiv\bm{x}_{i}$ and $r_{i,0}=0$.\\
	A crucial quantity in this context is the \emph{volume of the hyper-spherical shell enclosed between two successive neighbors of} $\bm{x}_i$, defined as
	\begin{equation}
		v_{i,l}=\omega_{d}\left(r_{i,l}^{d}-r_{i,l-1}^{d}\right), \quad \quad \text{for  }l =1,\ldots,n-1,\text{ and  }i=1,\dots,n, \label{eq::HSshell}
	\end{equation}
	where $d$ is the dimensionality of the space in which the points are embedded (the \texttt{id}) and $\omega_{d}$ is the volume of the $d$-dimensional sphere with unitary radius. 
	Figure \ref{spheres} provides a visual representation of the introduced quantities in a three-dimensional case.\\ 
	It is worth noticing that in the univariate case each $v_{i,l}$ simplifies into the distance $\Delta(\bm{x}_{i},\bm{x}_{(i,l)})$ and it is called \emph{inter-arrival time}.
	If the underlying Poisson point process is \emph{homogeneous}, implying that $\rho(\bm{x})=\rho \:\: \forall \bm{x}$, all the $v_{i,l}$'s are independent and 
	identically distributed as an Exponential random variable, with rate parameter equal to the density $\rho$ \citep{Kingman1992}. 
	Building on the work of \citet{Moltchanov2012}, \citet{Facco} have extended this result to the multivariate case, where hyper-spherical shells defined as in \eqref{eq::HSshell} are the proper multivariate extension of the univariate inter-arrival times. Therefore, as in the univariate case, we have $v_{i,l}\sim Exp(\rho)$, for $l =1,\ldots,n-1,$ and $i=1,\dots,n$.\\
	
	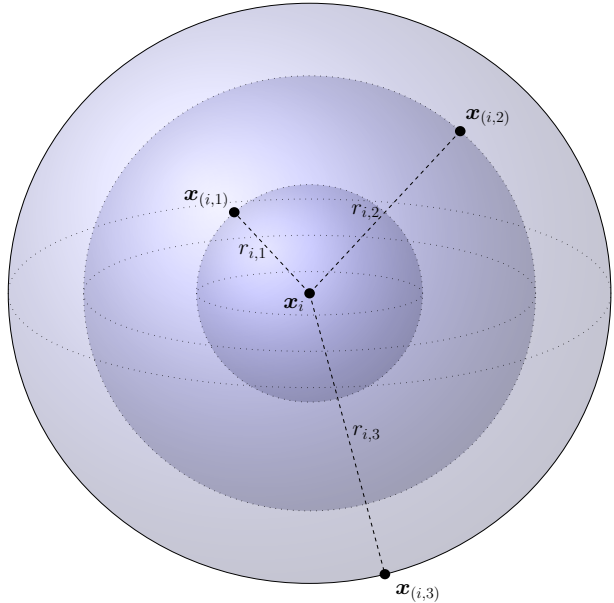
\begin{figure}[t!]
		\begin{center}
			\resizebox{8cm}{8cm}{%
				\begin{tikzpicture}
					\tikzstyle{every node}=[font=\LARGE]
					
					\shade[ball color = blue!40, opacity = 0.3] (0,0) circle (8cm);
					\shade[ball color = blue!40, opacity = 0.3] (0,0) circle (6cm);
					\shade[ball color = blue!40, opacity = 0.3] (0,0) circle (3cm);
					\draw (0,0) circle (8cm) node[below left] {$\bm{x}_i$};
					\draw[loosely dotted] (0,0) circle (6cm);
					\draw[loosely dotted] (0,0) circle (3cm);
					\draw[loosely dotted] (-8,0) arc (180:360:8 and 2.6);
					\draw[loosely dotted] (-6,0) arc (180:360:6 and 1.6);
					\draw[loosely dotted] (-3,0) arc (180:360:3 and 0.6);
					\draw[loosely dotted] (8,0) arc (0:180:8 and 2.6);
					\draw[loosely dotted] (6,0) arc (0:180:6 and 1.6);
					\draw[loosely dotted] (3,0) arc (0:180:3 and 0.6);

					\fill[fill=black] (0,0) circle (4pt);
					\fill[fill=black] (-2, 2.2360679775) circle (4pt) node[above left] {$\bm{x}_{(i,1)}$};
					\fill[fill=black] (4, 4.472135955) circle (4pt) node[above right] {$\bm{x}_{(i,2)}$};
					\fill[fill=black] (2, -7.74596669241) circle (4pt) node[below right = .2cm] {$\bm{x}_{(i,3)}$};
					
					\draw[dashed] (0,0 ) -- node[above,left]{$r_{i,1}$} (-2, 2.2360679775)  
					;
					\draw[dashed] (0,0 ) -- node[above,left]{$r_{i,2}$} (4, 4.472135955);
					\draw[dashed] (0,0 ) -- node[above,right]{$r_{i,3}$} (2, -7.74596669241);
				\end{tikzpicture}
			}
		\end{center}
		\caption{A pictorial representation in $\mathbb{R}^3$ of the quantities involved in the 
			\texttt{TWO-NN} modeling framework. The dots represent the data points. The selected observation, $\bm{x}_i$ is connected by dashed lines representing the distances $r_{i,j}$, $j=1,2,3$ to its first three NNs. The different spherical shells, characterized by different colors, have volume $v_{i,j}$, $j=1,2,3$. }  
		\label{spheres}
	\end{figure}
	Given these premises, the following theorem holds.
	
	\begin{theorem}
		Consider a distance function $\Delta$ taking values in $\mathbb{R}^+$ defined among the data points $\{\bm{x}_i\}_{i=1}^n$, which are a realization of a Poisson point process with constant density $\rho$. Let $r_{i,l}$ be the value of this distance between observation $i$ and its $l$-th NN. Then
		\begin{equation}\label{MOD1} 
			\mu_i = \dfrac{r_{i,2}}{r_{i,1}} \sim Pareto(1,d), \quad \quad \mu_i \in \left(1,+\infty\right).
		\end{equation}
		\label{Theo1}
	\end{theorem}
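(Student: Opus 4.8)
The plan is to exploit the single distributional fact established just above the theorem, namely that the shell volumes $v_{i,1}$ and $v_{i,2}$ are independent and identically distributed as $Exp(\rho)$, and to reduce the whole claim to an elementary change of variables. Since $r_{i,0}=0$, the definition in \eqref{eq::HSshell} gives $v_{i,1}=\omega_d r_{i,1}^d$ and $v_{i,1}+v_{i,2}=\omega_d r_{i,2}^d$. Taking the ratio, the volume constant $\omega_d$ cancels, so that
\begin{equation*}
	\mu_i^d=\frac{r_{i,2}^d}{r_{i,1}^d}=\frac{v_{i,1}+v_{i,2}}{v_{i,1}}=1+\frac{v_{i,2}}{v_{i,1}}.
\end{equation*}
Thus $\mu_i$ is a deterministic, strictly increasing function of the ratio $W:=v_{i,2}/v_{i,1}$ of two i.i.d.\ exponentials, and it suffices to determine the law of $W$ and push it through the map $w\mapsto(1+w)^{1/d}$.

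Next I would compute the distribution of $W$ directly. Conditioning on $v_{i,1}=x$ and integrating, the event $\{W\le w\}=\{v_{i,2}\le w\,v_{i,1}\}$ has probability
\begin{equation*}
	\mathbb{P}(W\le w)=\int_0^\infty\bigl(1-e^{-\rho w x}\bigr)\,\rho e^{-\rho x}\,dx=1-\frac{1}{1+w},\qquad w>0.
\end{equation*}
The crucial feature here is that the rate $\rho$ cancels entirely, so the law of $W$ is free of the unknown density; this is precisely what makes $\mu_i$ usable as an estimator. Equivalently, $W$ follows a standard beta-prime (F-type) distribution, a textbook consequence of the two exponentials sharing a common rate.

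Finally I would transfer this law to $\mu_i$. Writing $S:=1+W=\mu_i^d$, the computation above yields $\mathbb{P}(S\le s)=1-s^{-1}$ for $s>1$, i.e.\ $S\sim Pareto(1,1)$. Since $\mu_i=S^{1/d}$ with $d>0$, for any $\mu>1$ I obtain $\mathbb{P}(\mu_i\le\mu)=\mathbb{P}(S\le\mu^d)=1-\mu^{-d}$, which is exactly the cumulative distribution function of $Pareto(1,d)$; differentiating gives the density $d\,\mu^{-d-1}$ for $\mu>1$ and zero otherwise. The range $\mu_i\in(1,+\infty)$ is automatic because $v_{i,2}>0$ almost surely forces $r_{i,2}>r_{i,1}$. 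There is no genuinely hard step once the exponential fact is granted: the only things to get right are the reduction to the ratio of two i.i.d.\ exponentials and the bookkeeping of the two power transformations, while the conceptually important observation is the cancellation of $\rho$, which decouples the estimator from the data density.
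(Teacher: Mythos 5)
Your proof is correct, and its skeleton coincides with the paper's: both reduce the claim to the identity $\mu_i^d = 1 + v_{i,2}/v_{i,1}$ with $v_{i,1}, v_{i,2}$ i.i.d.\ $Exp(\rho)$, and then push a $Pareto(1,1)$ law through the power map $s \mapsto s^{1/d}$. The difference lies in how the two distributional ingredients are certified. The paper cites a known property $(\ast)$ --- if $X \sim Exp(\rho)$ and $Y \sim Erlang(n,\rho)$ are independent, then $X/Y + 1 \sim Pareto(1,n)$ --- and a separately proved scaling lemma ($X \sim Pareto(1,\alpha) \iff X^{q} \sim Pareto(1,\alpha/q)$, shown by a density change of variables), whereas you verify both facts from scratch: the law of $W = v_{i,2}/v_{i,1}$ via an explicit conditioning integral, and the power transformation at the level of distribution functions. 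Your version is more elementary and self-contained, and it makes explicit the cancellation of $\rho$, which the paper's argument leaves implicit inside the cited property; this is indeed the conceptual reason the estimator does not depend on the unknown density. What the paper's modular packaging buys is reuse: fact $(\ast)$ is stated for a general $Erlang(n,\rho)$ denominator, which is exactly what is needed later for the Cride marginals $\mu_{i,l} \sim Pareto(1,(l-1)d)$, while your direct computation as written covers only the case of a single exponential in the denominator (though it extends by running the same integral against a Gamma density).
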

	In other words, using only basic properties of the homogeneous Poisson point process, \citet{Facco} showed that the ratio of the distances between a point and, respectively, its second and first NNs is Pareto distributed, with scale parameter equal to 1 and shape parameter $d$. Within this modeling framework, the latter parameter corresponds to the $\id$ of the dataset. Recall that if $Y \sim Pareto(a,b)$ then the density function of $Y$ is defined as $f_Y(y)=ab^a y^{-a-1}$, with $y\in (b,+\infty)$.
	The most important implication of Theorem \ref{Theo1} is that, once a proper distance is computed between the observations, we can summarize all the information 
	contained in the data
	about the $\id$ with the summary statistics given by $\bm{\mu}=\{\mu_i\}_{i=1}^n$, regardless the number of features  $D$ present in a dataset $\bm{X}$. This reduces the task of $\id$ estimation into a simple, scalable, and univariate estimation problem.
	A detailed proof of Theorem \ref{Theo1} is contained in \citet{tesiFacco}. Here, we provide an equivalent proof based on two properties of the Pareto distribution, that we now state. First, we remind that that $(\ast)$ if $X \sim Exp\left(\rho\right)$ and $Y \sim Erlang (n,\rho)$ such that $X \perp \!\!\! \perp  Y$, then $Z = \frac{X}{Y} + 1 \sim Pareto\left(1,n\right)$. Moreover, we can prove the following Lemma.
	\begin{lemma}[Scaling property of the Pareto distribution]
		$X\sim Pareto(1,\alpha) \iff Y=X^{q} \sim Pareto(1,\alpha/q).$
		\label{scaling}
	\end{lemma}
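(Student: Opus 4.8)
The plan is to prove both directions at once by a direct change of variables on the cumulative distribution function, exploiting that $x\mapsto x^{q}$ is a strictly increasing bijection of $(1,+\infty)$ onto itself whenever $q>0$ (the only case of interest here, since $q$ will arise as a ratio of positive exponents $r^{d}$). Recalling the convention stated in the text, a $Pareto(1,\alpha)$ variable is supported on $(1,+\infty)$ with cumulative distribution function $F_X(x)=1-x^{-\alpha}$ for $x\ge 1$.

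First I would prove the forward implication. Assume $X\sim Pareto(1,\alpha)$ and set $Y=X^{q}$. Since $X>1$ and $q>0$ we have $Y>1$, and for any $y\ge 1$,
\[
F_Y(y)=\mathbb{P}\!\left(X^{q}\le y\right)=\mathbb{P}\!\left(X\le y^{1/q}\right)=1-\left(y^{1/q}\right)^{-\alpha}=1-y^{-\alpha/q},
\]
which is exactly the cumulative distribution function of a $Pareto(1,\alpha/q)$ random variable. Hence $Y\sim Pareto(1,\alpha/q)$.

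For the converse, rather than redo the computation I would invoke the forward statement with the roles reversed: apply it to $Y\sim Pareto(1,\alpha/q)$ with exponent $1/q$, so that $Y^{1/q}=X$ and the shape parameter transforms as $(\alpha/q)\big/(1/q)=\alpha$, giving $X\sim Pareto(1,\alpha)$. Because $x\mapsto x^{q}$ is a bijection of $(1,+\infty)$, no information is lost in either direction and the equivalence follows.

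I do not anticipate a genuine obstacle: the argument is essentially a one-line change of variables, and the only points requiring care are (i) restricting to $q>0$ so that the power map is monotone and the support is preserved, and (ii) tracking the support $(1,+\infty)$ so that the scale parameter stays fixed at $1$ while only the shape parameter rescales. An equivalent route would differentiate to work with densities and apply the Jacobian factor $\tfrac{1}{q}\,y^{1/q-1}$ directly, but the distribution-function computation is cleaner and makes the equivalence transparent.
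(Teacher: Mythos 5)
Your proof is correct. It differs from the paper's only in the mechanics of the change of variables: the paper works at the level of densities, substituting $x=y^{1/q}$ into $f_X(x)=\alpha x^{-(1+\alpha)}$ and multiplying by the Jacobian factor $\tfrac{1}{q}y^{1/q-1}$, whereas you work at the level of distribution functions, computing $F_Y(y)=\mathbb{P}\left(X\le y^{1/q}\right)=1-y^{-\alpha/q}$. These are essentially the same one-line argument, but your CDF route has two small advantages: it makes explicit the monotonicity requirement $q>0$ and the preservation of the support $(1,+\infty)$, both of which the paper leaves implicit, and it avoids exponent bookkeeping --- indeed the paper's displayed density carries a typo, $y^{(-\alpha/q+1)}$ where $y^{-(\alpha/q+1)}$ is meant, a slip your computation sidesteps entirely. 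Your treatment of the converse (apply the forward direction to $Y$ with exponent $1/q$, so the shape parameter maps back to $\alpha$) is the same idea as the paper's remark that the converse follows ``by simply applying the inverse transformation,'' just spelled out more carefully.
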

	\begin{proof}
		If $X\sim Pareto(1,\alpha)$, then $f_X(x) = \alpha x^{-(1+\alpha)}$.
		We consider the transformation $X=Y^{1/q}$ and compute $\frac{d}{dy}y^{1/q}=\frac{1}{q}y^{1/q-1}$. Then the density of $Y$ can be expressed as:
		\[ f_Y(y) = \alpha y^{-(1/q+\alpha/q)}\frac{1}{q}y^{1/q-1}= \left(\frac{\alpha}{q}\right) y^{(-\alpha/q +1)} \]
		which is the density of a $Pareto(1,\alpha/q)$ random variable. The converse can be shown by simply applying the inverse transformation.
	\end{proof}
	
	We are now ready to prove Theorem \ref{Theo1}.
	
	\begin{proof}
		Let us consider a generic point $\bm{x}_i$ and its corresponding volumes $\{v_{i,l}\}_{l=1}^{n-1}$ as defined in \eqref{eq::HSshell}. If the density of the Poisson point process is constant, 
		then $v_{i,l} \stackrel{i.i.d.}{\sim} Exp(\rho)$ for all $l$. Recall that the Exponential distribution is equivalent to an $Erlang(1,\rho)$ distribution. Then, according to $(\ast)$, $\frac{v_{i,2}}{v_{i,1}}+1 \sim Pareto(1,1)$. Also, we have that $\frac{v_{i,2}}{v_{i,1}}+1 = r_{i,2}^d/r^d_{i,1}$. We can then conclude that
		\[  \mu_i = \frac{r_{i,2}}{r_{i,1}} = \left( \frac{v_{i,2}}{v_{i,1}}+1 \right)^{1/d}, \]
		which according to Lemma \ref{scaling} implies $\mu_i \sim Pareto(1,d)$.
	\end{proof}
	
	We remark that, while the theorem can be proven only if the density is constant, the result and the $\id$ estimator are empirically valid as long as the density is approximately constant on the scale defined by the distance of the second NN $r_{i,2}$. We refer to this weakened assumption as \emph{local homogeneity}.\\
	
	The \texttt{TWO-NN} estimator treats the ratios $\mu_i$'s as independent, $i=1,\ldots,n$, and estimates the overall $\id$ $d$ on the entire dataset employing a least-squared approach. In detail, \citet{Facco} propose to consider the c.d.f. of each ratio $\mu_i$, given by $F({\mu_i})= (1-\mu_i^{-d})$, and to linearize it into $\log(1-F({\mu_i}))=-d\log(\mu_i)$. Then, a linear regression with no intercept is fitted to the pairs $\{-\log(1-\tilde{F}(\mu_{(i)})),\log(\mu_{(i)}) \}_{i=1}^n$, where $\tilde{F}(\mu_{(i)})$ denotes the empirical c.d.f. of the sample $\bm{\mu}$ sorted by increasing order. To enhance the estimation, the authors also suggested discarding the last percentiles 
	of the ratios $\mu_i$'s, usually generated by observations that fail to comply with the local homogeneity assumption.\\
	
	
	Before introducing other possible estimators for $d$, it is worth discussing the validity of the hypotheses we have made so far.
	As previously remarked, from a practical perspective we require that the density of the Poisson point process generating the data has to be locally constant, at least on the scale of the second NN of each point.
	In real applications, this hypothesis is satisfied if the available sample size is large enough, implying a densely populated space. 
	However, this assumption may fail in regions of the support where the data points are scarce. \\
	This issue is also linked to the \emph{curse of dimensionality} (CoD). 
	The sample size needed to produce a configuration of points that uniformly populates the manifold into consideration needs to scale exponentially with its dimension. To see how the CoD can affect the estimation, consider the theoretical setting in which we deal with a homogeneous Poisson process. The effect of the CoD becomes evident if we focus on the expected value and the variance of the random variable $\mu_i$, given by $\mathbb{E}\left[\mu_i\right]=d/(d-1)$ and $\mathbb{V}\left[\mu_i\right]=d/((d-1)^2(d-2))$, respectively. 
	If $d \rightarrow + \infty$, then the Pareto distribution collapses to a point mass in 1. Intuitively, when the dimensionality of the space that embeds the sample diverges, the distance between data points grows. As both the numerators and the denominators of all the elements in $\bm{\mu}$ scale with the same speed, they are asymptotically indistinguishable. 
	Therefore, in large dimensions, the hypothesis of local homogeneity is more likely to be violated when dealing with a fixed sample size. In those cases, the estimates based on the previous results are to be considered as lower bounds of the  true $d$ \citep{Ansuini2019}.\\
	
	The assumption of \emph{independence} among the elements of $\bm{\mu}$ allows the derivation of simple estimators for the parameter of interest. However, this is not always satisfied in practice because multiple observations can share the same NNs, and therefore the same distances. A possible solution would be to decimate the sample and eliminate the 
	NNs shared by multiple points before the analysis. However, as already shown in \citet{Allegra}, the estimates using the decimated samples do not substantially deviate from the ones obtained using all the data.\\
	
	Finally, the \texttt{TWO-NN} model does not directly consider the presence of noise in the dataset. Measurement errors can significantly impact the estimates since the $\id$ estimators are sensitive to scale effects. To exemplify, consider a dataset of 5,000 observations measured in $\mathbb{R}^3$ created as follows. The first two coordinates are obtained 
	from the spiral 
	defined by the parametric equations $x=u \cos(u)$ and $y= u\sin(u)$, where $u$ is sampled from a Uniform random variable with support $\left[\frac{1}{4\pi},1\right]$. The third coordinate is defined as a function of the previous two, $z = x^2 + y^2$. Gaussian random noise is added to all the three coordinates.
	A three-dimensional depiction of the resulting dataset is reported in the left half of Figure \ref{spiral3d}.
	The value of the $\id$ estimated with the \texttt{TWO-NN} model is 2.99. However, $u$ is the only stochastic quantity involved: all the coordinates are deterministically derived. Therefore, there is only one degree of freedom used in the data generating process. In other words, the true $\id$ is 1, and the noise misleads the \texttt{TWO-NN} estimator. \\
	In the next section, we will introduce novel estimators based on ratios of NNs distances of order higher than the second. By extending the order of NNs distances that we consider, we create estimators that can escape the short, ``local reach'' of the \texttt{TWO-NN} model, which is extremely sensitive to noise. Extending the neighborhood of a point to more NNs allows to extract meaningful information about the topology and the scale of the dataset at hand.\\
	
	\begin{figure}[ht]
		\begin{center}
			\includegraphics[scale=.25,trim=100 0 100 0,clip]{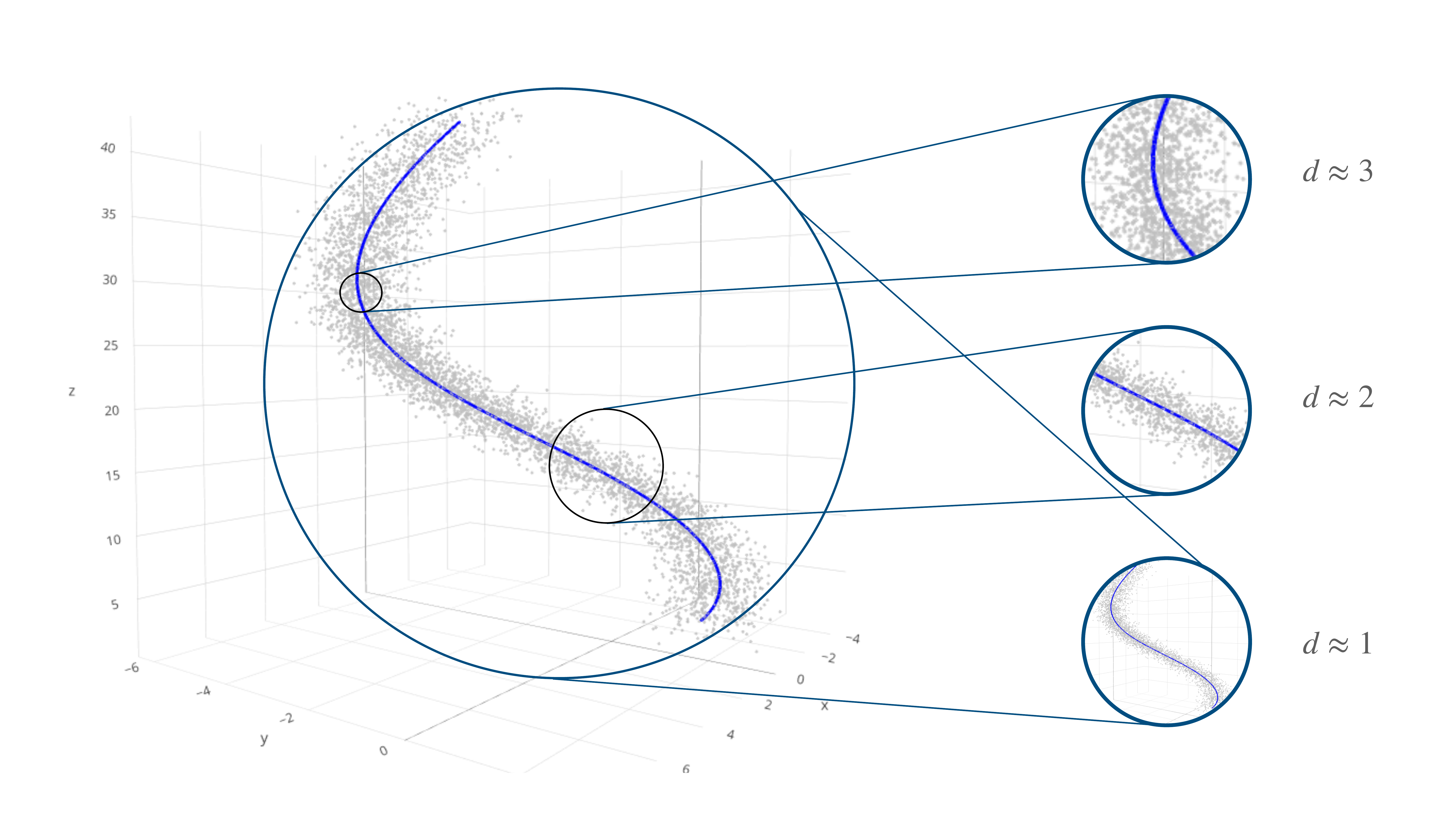}
		\end{center}
		\label{spiral3d}
		\caption{Three dimensional spiral. A dataset of 5,000 points is generated with deterministic transfomations starting from a Uniform sample. The resulting data points are displayed on the left. On the right, we show how 
			observing the data from different scales can 
			results in	different insights regarding the dimensionality of the data.}
	\end{figure}
	
	Since it is based on a simple linear regression, the \texttt{TWO-NN} estimator provides fast and accurate estimation of the $\id$, even when the sample size is large. Nonetheless, from \eqref{MOD1} we can immediately derive the corresponding Maximum Likelihood estimator (MLE) and the posterior distribution of $d$ under a Bayesian setting.\\
	Let us first discuss the MLE and the relative confidence intervals (CI). Trivially, for the shape parameter of a Pareto distribution the (unbiased) MLE is given by:
	
	\begin{equation}
		\hat{d} = \frac{n-1}{\sum_{i}^n\log(\mu_i)}.
		\label{MLE:TWONN}
	\end{equation}
	Moreover, $\hat{d}/d \sim IG(n,(n-1))$, where $IG$ denotes an Inverse-Gamma distribution. Therefore, the corresponding CI of level (1-$\alpha$) is given by  
	\begin{equation}
		CI(d,1-\alpha)=\left[\frac{\hat{d}}{q^{1-\alpha/2}_{IG_{n,(n-1)}}};\frac{\hat{d}}{q^{\alpha/2}_{IG_{n,(n-1)}}}\right],
		\label{CI}
	\end{equation}
	where $q^{\alpha/2}_{IG}$ denotes the quantile of order $\alpha/2$ of an Inverse-Gamma distribution.\\ 
	To carry out inference under the Bayesian approach, we specify a prior distribution on the parameter $d$. The most straightforward prior to choose is $d\sim Gamma(a,b)$ because of its conjugacy property. In this case, it is immediate to derive the posterior distribution:
	\begin{equation}
		d|\bm{\mu} \sim Gamma\left(a+n, b+\sum_{i=1}^n \log(\mu_i)\right).
	\end{equation}
	To perform model checking and assess the goodness of fit of the model in different scenarios, we can also compute the posterior predictive distribution.
	Let us define $a^*=a+n$ and $b^*=b+\sum_{i=1}^n \log(\mu_i)$. We obtain:
	\begin{equation}\label{Lomax}
		p(\tilde{\mu}|\bm{\mu})= \frac{a^*}{b^*\;\tilde{\mu}}\left(1+\frac{\log(\tilde{\mu})}{b^*}\right)^{-a^*-1}, \text{  with  } \: \tilde{\mu} \in \left(1,+\infty\right).
	\end{equation}
	From Equation \eqref{Lomax}, it can be easily shown that posterior predictive law for $\log(\tilde{\mu})$ follows a $Lomax(a^{*},b^{*})$ distribution, for which samplers are readily available.
	
	\section{Likelihood-based \texttt{id} estimators}
	\label{sec:extension}
	In this section, we develop novel theoretical results that contribute to Poisson point processes theory and that we will use to devise more precise estimators of $d$.
	In detail, we first extend the distributional results of Section \ref{sec:modelingbackground} providing closed-form distributions for vectors of consecutive ratios of distances and ratios of NNs of generic order. In both cases, we derive the corresponding estimators for the $\id$ parameter.
	
	\subsection{Distribution of consecutive ratios and the \texttt{Cride} estimator}
	Consider the same setting introduced in the previous section and  
	define $V_{i,l} = \omega_d \, r^d_{i,l}$ as the volume of the hyper-sphere centered in $\bm{x}_i$ with radius equal to the distance between $\bm{x}_i$ and its $l$-th NN. Because of their definitions, for $l=2,\ldots,L$, we have that $v_{i,l}$ and $V_{i,l-1}=v_{i,1}+\cdots +v_{i,l-1}$ are independent. Moreover, $V_{i,l}\sim Erlang(1,l-1)$. Then, we can write
	\begin{equation}
		\frac{v_{i,l}}{V_{i,l-1}} = \frac{\omega_d \left(r_{i,l}^d-r_{i,l-1}^d  \right)}{\omega_d r^d_{i,l-1}} =\left( \frac{r_{i,l} }{ r_{i,l-1}}\right)^d-1,
	\end{equation}
	which becomes, after a little algebra,
	\begin{equation}
		\label{mu1}
		\mu_{i,l}=\frac{r_{i,l} }{ r_{i,l-1}} = \left(\frac{v_{i,l}}{V_{i,l-1}}+1\right)^{1/d}.
	\end{equation}
	Given these premises, the following theorem holds.

	\begin{theorem}
		Consider a distance $\Delta$ taking values in $\mathbb{R}^+$ defined among the data points $\{\bm{x}_i\}_{i=1}^n$, which are a realization of a Poisson point process with constant density $\rho$. Let $r_{i,l}$ be the value of the distance between observation $i$ and its $l$-th NN. Define $\mu_{i,l} =r_{i,l} / r_{i,l-1}$.
		It follows that
		\begin{equation}
			\begin{aligned}
				\mu_{i,l}& \sim  Pareto(1,(l-1)d), \quad \text{  for  }\quad  l = 2,\ldots,L.
			\end{aligned}
			\label{musgammas}
		\end{equation}
		Moreover, the elements of the vector $\bm{\mu_{i,L}}=\{ \mu_{i,l} \}_{l=2}^L$ are jointly independent.
		\label{Theo2}
	\end{theorem}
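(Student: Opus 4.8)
The plan is to treat the two assertions separately: first the marginal law of each $\mu_{i,l}$, then the joint independence of the whole vector $\bm{\mu_{i,L}}$. For the marginal statement I would argue exactly as in Theorem \ref{Theo1}. By Equation \eqref{mu1} we have $\mu_{i,l} = (v_{i,l}/V_{i,l-1} + 1)^{1/d}$, where $v_{i,l} \sim Exp(\rho)$ and $V_{i,l-1} = v_{i,1} + \cdots + v_{i,l-1} \sim Erlang(l-1, \rho)$ are independent. The auxiliary fact $(\ast)$ then gives $v_{i,l}/V_{i,l-1} + 1 \sim Pareto(1, l-1)$, and applying the scaling Lemma \ref{scaling} with $q = 1/d$ upgrades the shape to $(l-1)d$, which is exactly \eqref{musgammas}.

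The genuinely new content is the joint independence, and here the obstacle is that the $\mu_{i,l}$ are not built from disjoint blocks of the shells: each one involves the cumulative volume $V_{i,l-1} = v_{i,1} + \cdots + v_{i,l-1}$, so consecutive ratios share all the earlier $v$'s. A naive sequential-conditioning argument fails, because the conditional law of $\mu_{i,l}$ given $V_{i,l-1}$ still depends on $V_{i,l-1}$. The key observation that rescues the argument is scale invariance: setting $T_{i,l} := \mu_{i,l}^d = V_{i,l}/V_{i,l-1}$, each $\mu_{i,l}$ is a strictly monotone function of $T_{i,l}$ alone, so it suffices to prove that $T_{i,2}, \ldots, T_{i,L}$ are mutually independent, and every $T_{i,l}$ is unchanged if all the $v_{i,m}$ are multiplied by a common constant.

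To make this precise I would perform the change of variables $(v_{i,1}, \ldots, v_{i,L}) \mapsto (V_{i,1}, T_{i,2}, \ldots, T_{i,L})$, where $V_{i,1} = v_{i,1}$ plays the role of the overall scale and $V_{i,l} = V_{i,1}\prod_{m=2}^l T_{i,m}$, $v_{i,l} = V_{i,l-1}(T_{i,l}-1)$. Since $v_{i,l}$ depends only on $V_{i,1}$ and $T_{i,2}, \ldots, T_{i,l}$, the Jacobian matrix is triangular and its determinant is the product of the diagonal entries $\prod_{l=2}^L V_{i,l-1}$, a monomial in $V_{i,1}$ and the $T_{i,l}$. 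Substituting into the joint exponential density $\rho^L \exp(-\rho\sum_l v_{i,l})$ and integrating out the scale $V_{i,1}$ over $(0,\infty)$ — a standard Gamma integral — I expect the resulting density of $(T_{i,2}, \ldots, T_{i,L})$ to collapse to $\prod_{l=2}^L (l-1)\, T_{i,l}^{-l}$, the product of the $Pareto(1, l-1)$ densities.

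This single computation simultaneously re-derives the marginals and exhibits the factorization, so the $T_{i,l}$, and hence the $\mu_{i,l} = T_{i,l}^{1/d}$, are jointly independent. The main bookkeeping difficulty I anticipate is tracking the exponents of the $T_{i,l}$ produced by the Jacobian determinant together with the normalization from the Gamma integral, and verifying that they recombine into exactly $-l$ so that each factor is precisely a $Pareto(1, l-1)$ density; everything else is routine.
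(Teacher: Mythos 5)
Your proposal is correct and takes essentially the same route as the paper: the paper's proof also changes variables from the shell volumes to a scale coordinate plus ratio coordinates (it uses $\gamma_l=\log\mu_{i,l}$ rather than your $T_{i,l}=\mu_{i,l}^{d}$, which differ only by the coordinate-wise monotone map $\gamma_l=\tfrac{1}{d}\log T_{i,l}$), computes the same triangular Jacobian, integrates out the scale via the same Gamma integral, and reads off the factorization (into independent exponentials there, independent $Pareto(1,l-1)$ laws here). Your anticipated bookkeeping is right: the Jacobian $\prod_{l=2}^{L}V_{i,l-1}$ and the normalization $(L-1)!$ from the scale integral recombine to give exactly $\prod_{l=2}^{L}(l-1)\,T_{i,l}^{-l}$.
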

	
	\begin{proof}
		The marginal distributions stated in Equation \eqref{musgammas}  follow by the application of elementary properties of Exponential, Gamma and Pareto random variables to Equation \eqref{mu1}. \\
		We now prove that the joint independence of the elements of the vector $\bm{\mu_{i,L}}$. We drop the observational index $i$ for ease of exposition. Let us denote $\gamma_l=\log\left(\frac{r_{l}}{r_{l-1}}\right)$, for $l=2,3,\ldots,L$. We want to derive the joint density of $\bm{\gamma}_L=\left(\gamma_2,\ldots,\gamma_L\right)$.
		To do so, we start from the joint density of $\left(v_1,v_2,v_3,\ldots,v_L\right)$, denoted by $f(v_1,v_2,v_3,\ldots,v_L) = \rho^L \exp\left[-\rho\sum_{l=1}^{L}v_l\right].$
		Consider the following one-to-one transformation of the vector $\bm{\gamma}_L$:
		$$  \gamma_1 = v_1  \quad \text{  and  }\quad \gamma_l = \frac{1}{d} \log\left(1+\frac{v_l}{\sum_{k=1}^{l-1}v_k}\right), \:\: l=2,\ldots,L $$
		with inverse
		$$  v_1 = \gamma_1  \quad \text{  and  }\quad v_l = \gamma_1\exp\left(d\sum_{k=2}^{l-1}\gamma_k\right) \left(\exp\left( d\gamma_l\right)-1\right), \:\: l=2,\ldots,L  .$$
		The determinant of the Jacobian matrix $J$ associated with this transformation is \\$|J| = \gamma^{L-1}_1 d^{L-1} \prod_{l=2}^{L} \exp\left[d\cdot(L-l+1)  \gamma_l\right].$ \\
		Consequently, the density of the transformed vector is 	
		\begin{align*}
			f(\bm{\gamma}) &=\rho^L\gamma^{L-1}_1 d^{L-1}\: \exp\left[-\rho \gamma_1 \exp\left(d\sum_{l=2}^{L}\gamma_l \right)\right] \prod_{l=2}^{L}\: \exp\left[d\cdot(L-l+1)  \gamma_l\right].
		\end{align*} 
		We then integrate out $\gamma_1$ to obtain:
		\begin{align*}
			f\left(\bm{\gamma}_L\right) =& d^{L-1} \prod_{l=2}^{L}(l-1) \exp\left[-(l-1)d\gamma_l\right] =  \prod_{l=2}^{L} (l-1)d \exp\left[-(l-1)d\gamma_l\right].
		\end{align*}
		Since $f\left(\bm{\gamma}_L\right)=\prod_{l=2}^{L}f\left(\gamma_l\right)$, we can conclude that $\gamma_2,\ldots,\gamma_L$ are independent exponential random variables. Finally, given that $X\sim Pareto(1,a)\iff \log(X) \sim Exp(a)$, we  consider $\bm{\mu}_L=\exp(\bm{\gamma}_L)$ and conclude the proof.
	\end{proof}

	Theorem \ref{Theo2} provides a way to characterize the distributions of consecutive ratios of distances. Remarkably, given the homogeneity assumption, the different ratios are all independent. Therefore, since all of the $L-1$ densities depend on the same shape parameter $d$, we can derive an estimator that can use more information extracted from the data. The (unbiased) MLE in this case becomes
	
	\begin{equation} \label{MLECride}
		\hat{d}_L = \frac{n(L-1)-1}{\sum_{i=1}^n\sum_{l=2}^L (l-1)\log(\mu_{i,l})}.
	\end{equation}
	
	This estimator has variance $\mathbb{V}\left[\hat{d}_L\right]=d^2/(n(L-1)-2)$ which is smaller that the variance of the MLE estimator in \eqref{MLE:TWONN}, that is recovered when $L=2$. The CI is analogous to \eqref{CI}, with $n$ substituted by $n(L-1)$. From a Bayesian perspective, we can again specify a conjugate Gamma prior for $d$, obtaining the posterior distribution
	
	\begin{equation}
		\hat{d}_L|\bm{\mu}_{L} \sim Gamma \left( a+n(L-1),b+\sum_{i=1}^n\sum_{l=2}^L (l-1)\log(\mu_{i,l})\right).
	\end{equation}
	
	Alternatively, one can go back to the univariate modeling case by considering the transformation $\gamma_{i,l}=\log\left(\mu_{i,l} \right)$, obtaining that $\bm{\gamma}_{i,l} \sim  Exp((l-1)d)$ and define
	\begin{equation}
		\Gamma_{i,L}=\sum_{l=2}^{L} (l-1) \cdot \gamma_{i,{l}} \sim Erlang\left( L-1 ,d\right), \quad i=1,\ldots,n.
		\label{alternatives1}
	\end{equation}
	It can be proven that the MLE obtained from \eqref{alternatives1} is identical to the one presented in Equation \eqref{MLECride}. We name the estimators derived from Theorem \ref{Theo2} the Consecutive Ratios $\id$ Estimators -- \texttt{Cride}.
	We remark that many other distributions can be employed using the properties of the Exponential random variables. As an example, for a generic observation $i$ and a generic ratio of order $l$, the following statements are equivalent to \eqref{alternatives1}:
	\begin{equation*}
		\gamma_{i,l}^2 \sim Weibull\left(\frac12,\frac{1}{(l-1)^2d^2} \right),\quad \mu - \sigma\log\left((l-1)d\gamma_{i,l})\right) \sim GEV\left(\mu,\sigma,0\right), 
		\label{alternatives2}
	\end{equation*}
	where $GEV$ indicates the Generalized Extreme Values distribution \citep{McFadden1978}. These distribution are well known in Extreme Value Theory (EVT). Other authors have recently developed an \texttt{id} estimator in an EVT framework \citep{Amsaleg2015,Houle2013}: we leave the investigation of potential connections among these two fields for future research. \\
	
	To conclude this subsection, we underline that the structure of the MLE estimators \eqref{MLECride} (and consequently \eqref{MLE:TWONN}) is equivalent to the one proposed in \citet{Levina} when focusing on one single data point, since $\sum_{l=2}^L (l-1) \log(\mu_{i,l})$ can be rewritten $\sum_{l=1}^{L-1} \log(r_{L}/r_{l})$. This result is unsurprising: despite following different derivations, we started from the same premises, as already underlined in \citet{Facco}. However, the main difference is how the estimators combine the information extracted from the entire dataset. Our theoretical derivation naturally leads to average the inverses of the contributions to the likelihood of every single data point rather than considering a simple average. To this extent, we see that \texttt{Cride} is equivalent to the estimator proposed in a comment by \citet{Comment}. Although the resulting MLEs are the same, we believe that our approach presents an advantage. Indeed, starting from the distributions of the ratios of NNs distances, we can effortlessly derive uncertainty quantification estimates, as in \eqref{CI}, by simply exploiting well-known properties of the Pareto random variable. In the following subsection, we present another estimator that relies on a single ratio of distances for each data point (similarly to the \texttt{TWO-NN}) while considering information collected on larger neighbors (similarly to \texttt{Cride}).

	\subsection{Distributions of generic ratios, distances, and \texttt{Gride}}
	
	Building of the previous statements, we can derive more general results about the distances between NNs from a Poisson point process realization. The next theorem characterizes the distribution of the ratio of distances from two NNs of generic order. 
	
	\begin{theorem}
		Consider a distance $\Delta$ taking values in $\mathbb{R}^+$ defined among the data points $\{\bm{x}_i\}_{i=1}^n$, which are a realization of a Poisson point process with constant density $\rho$. Let $r_{i,l}$ be the value of this distance between observation $i$ and its $l$-th NN. Consider two integers $1\leq n_1<n_2$ 
		and define $\dot{\mu}=\mu_{i,n_1,n_2} = r_{i,n_2} / r_{i,n_1}$.
		The random variable $\dot{\mu}$ is characterized by density function 	
		\begin{equation}
			\label{mudot}
			f_{\mu_{i,n_1,n_2}}(\dot{\mu})= 
			\frac{d(\dot{\mu}^d-1)^{n_2-n_1-1}}{\dot{\mu}^{(n_2-1)d+1}B(n_2-n_1,n_1)}, \quad \dot{\mu}>1,
		\end{equation}
		where $B(\cdot,\cdot)$ denotes the Beta function. Moreover, $\dot{\mu}$ has $k$-th moment given by
		\begin{equation}
			\mathbb{E}\left[\dot{\mu}^k\right]=\frac{B(n_2-n_1,n_1-k/d)}{B(n_2-n_1,n_1)}.
		\end{equation} 
		\label{Theo3}
	\end{theorem}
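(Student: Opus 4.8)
The plan is to reduce everything to the joint behaviour of the shell volumes $v_{i,l}$, exactly as in the proofs of Theorems \ref{Theo1} and \ref{Theo2}, and then to recognise a Beta-prime (inverted Beta) law. Dropping the index $i$, recall that $V_l = \omega_d r_l^d = v_1 + \cdots + v_l$, so that the quantity of interest satisfies
\[
\dot{\mu}^d = \frac{r_{n_2}^d}{r_{n_1}^d} = \frac{V_{n_2}}{V_{n_1}}.
\]
First I would split $V_{n_2} = V_{n_1} + W$, where $W = v_{n_1+1} + \cdots + v_{n_2}$ collects the shell volumes of orders $n_1+1,\dots,n_2$. Since the $v_l$ are i.i.d.\ $Exp(\rho)$, the variables $V_{n_1} \sim Erlang(n_1,\rho)$ and $W \sim Erlang(n_2-n_1,\rho)$ are built from \emph{disjoint} blocks of the $v_l$ and are therefore independent. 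Hence $\dot{\mu}^d - 1 = W/V_{n_1}$ is a ratio of two independent Gamma variables sharing the rate $\rho$.

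The key step is to identify the law of $T := W/V_{n_1}$. Because the common rate cancels, $T$ depends only on the shape parameters, and the standard Gamma-ratio computation gives $T \sim \beta'(n_2-n_1, n_1)$, the Beta-prime distribution with density
\[
f_T(t) = \frac{t^{n_2-n_1-1}(1+t)^{-n_2}}{B(n_2-n_1, n_1)}, \qquad t > 0
\]
(equivalently $W/(W+V_{n_1}) \sim Beta(n_2-n_1, n_1)$). From here the density \eqref{mudot} follows by the monotone change of variables $\dot{\mu} = (1+T)^{1/d}$: substituting $t = \dot{\mu}^d - 1$, $1+t = \dot{\mu}^d$, and multiplying by the Jacobian $dt/d\dot{\mu} = d\,\dot{\mu}^{d-1}$ collapses the powers of $\dot{\mu}$ into the single exponent $-(n_2-1)d - 1$ appearing in the statement.

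For the moments I would avoid the density of $\dot{\mu}$ altogether and work with $T$ directly, writing $\mathbb{E}[\dot{\mu}^k] = \mathbb{E}[(1+T)^{k/d}]$ and integrating against $f_T$. This reduces the computation to the single Beta-type integral $\int_0^\infty t^{n_2-n_1-1}(1+t)^{k/d - n_2}\,dt$, which is recognised as $B(n_2-n_1,\, n_1 - k/d)$ by the normalisation of the Beta-prime density; dividing by $B(n_2-n_1,n_1)$ then yields the claimed expression for $\mathbb{E}[\dot{\mu}^k]$.

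The calculations here are elementary, so I do not anticipate a genuine obstacle; the only points that require care are (i) the independence argument, which rests on decomposing $V_{n_2}$ and $V_{n_1}$ into disjoint sets of $v_l$ rather than treating the two nested partial sums as independent, and (ii) the existence condition for the moments: the integral converges, and $B(n_2-n_1, n_1-k/d)$ is finite, only when $n_1 - k/d > 0$, i.e.\ $k < n_1 d$, a restriction that should be recorded alongside the moment formula.
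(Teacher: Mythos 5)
Your proposal is correct, but it follows a genuinely different route from the paper. The paper proves the density by writing $\log\dot{\mu}$ as a telescopic sum $\sum_{l=n_1+1}^{n_2}\gamma_l$ of independent exponential random variables with distinct rates $n_1 d,\ldots,(n_2-1)d$ (relying on Theorem \ref{Theo2}), invoking the hypo-exponential density formula, and then transforming back to $\dot{\mu}$; this requires roughly a page of combinatorial algebra (the identity $\prod_{l\neq j}(l-j)=(j-1)!(n_2-n_1-j)!(-1)^{j+1}$, an index reflection, and the Newton binomial formula) to collapse the alternating sum into the factor $(\dot{\mu}^d-1)^{n_2-n_1-1}$ and the Beta-function constant. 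You instead bypass the telescoping and the hypo-exponential machinery entirely: the decomposition $\dot{\mu}^d-1=W/V_{n_1}$, with $W$ and $V_{n_1}$ being Gamma variables built from \emph{disjoint} blocks of the i.i.d.\ shell volumes (hence independent, with the common rate $\rho$ cancelling), identifies the Beta-prime law $\beta'(n_2-n_1,n_1)$ at the outset, after which a one-line change of variables yields \eqref{mudot}. Notably, the paper itself arrives at this Beta-prime observation, but only \emph{after} deriving the density, as a device for computing the moments ($Z=\dot{\mu}^d-1\sim\beta'(n_2-n_1,n_1)$); you reverse that logical order and make the Gamma-ratio structure the engine of the whole proof, which is shorter and arguably more transparent. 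The moment computation is then essentially identical in both treatments, including the existence condition $k<n_1 d$, which you correctly record. The only thing the paper's longer route buys is the explicit hypo-exponential density of $\log\dot{\mu}$ as a byproduct; your argument, in exchange, avoids a delicate alternating-sum simplification in which sign errors are easy to make.
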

	\begin{proof}
		Let $\{W_i\}_{i=1}^n$, $n \geq 2$, denote a sequence of independent Exponential random variables with pairwise distinct parameters $\lambda_i$. The sum of $n$ random variables $W_i\sim Exp(\lambda_i)$ is said to follow an hypo-exponential distribution, with density
		$$f_{\sum_{i=1}^{n}W_i}(w)=\left[\prod_{i=1}^{n} \lambda_{i}\right] \sum_{j=1}^{n} \frac{\mathrm{e}^{-\lambda_{j} w}}{\prod_{l \neq j \atop l=1}^{n}\left(\lambda_{l}-\lambda_{j}\right)}, \quad w>0. $$	
		Our goal is to characterize the distribution of $\dot{\mu}=\mu_{i,n_1,n_2} = \frac{r_{n_2}}{r_{n_1}}$, with $n_2>n_1$ integer values. 
		First, we notice that $\dot{\mu}$ can be rewritten as telescopic product of $n_2-n_1$ ratios, all independent and Pareto distributed:	
		\[\dot{\mu} = \frac{r_{n_2}}{r_{n_1}}=  \frac{r_{n_2}}{r_{n_2-1}}\cdot \frac{r_{n_2-1}}{r_{n_2-2}}\cdots \frac{r_{n_1+1}}{r_{n_1}}.\]
		Define $\gamma_l=\log\left(\frac{r_{l}}{r_{l-1}}\right)$ and consider $Y=\log(\dot{\mu})$. Then, we can write $Y = \log (\dot{\mu})= \log\left(\frac{r_{n_2}}{r_{n_1}}\right)=  \sum_{l=n_1+1}^{n_2}\gamma_{l}$.
		Since each $\gamma_l$ is defined as the logarithm of a Pareto distribution, we have just shown that $Y$ is a sum of $L=n_2-n_1$ independent Exponential random variables with parameters ranging from $n_1d$ to $(n_2-1)d$. Plugging these parameters into the the definition of hypo-exponential density, we can write the distribution of $Y$ as
		\begin{equation}
			\begin{aligned}
				f_{Y}(y)=
				d \frac{(n_2-1)!}{(n_1-1)!} \sum_{j=1}^{n_2-n_1} \frac{\mathrm{e}^{- (n_1+j-1)dy}}{\prod_{l \neq j \atop l=1}^{n_2-n_1}\left(l-j\right)}, \quad y>0.\\
			\end{aligned}
			\label{hypoexp}
		\end{equation}
		From here, we derive the distribution for $\dot{\mu}=\exp(Y)$, transforming the last density in \eqref{hypoexp}.
		\begin{equation*}
			\begin{aligned}
				f_{\dot{\mu}}(\dot{\mu})=& d \frac{(n_2-1)!}{(n_1-1)!} \frac{1}{\dot{\mu}} \sum_{j=1}^{n_2-n_1} \frac{\mathrm{e}^{- (n_1+j-1)d\log \dot{\mu}}}{\prod_{l \neq j \atop l=1}^{n_2-n_1}\left(l-j\right)} 
				= d \frac{(n_2-1)!}{(n_1-1)!} \sum_{j=1}^{n_2-n_1} \frac{\dot{\mu}^{- (n_1+j-1)d-1}}{\prod_{l \neq j \atop l=1}^{n_2-n_1}\left(l-j\right)},\\=&
				d \frac{(n_2-1)!}{(n_1-1)!} \sum_{j=1}^{n_2-n_1} \frac{\dot{\mu}^{- (n_1+j-1)d-1}}{(j-1)!(n_2-n_1-j)!(-1)^{j-1}}\\
				=& d \frac{(n_2-1)!}{(n_1-1)!} \sum_{k=1}^{n_2-n_1} \frac{\dot{\mu}^{- (n_2-k)d-1}}{(k-1)!(n_2-n_1-k)!(-1)^{n_2-n_1-k}}\\=&
				\frac{d}{\dot{\mu}^{n_2d+1}} \frac{(n_2-1)!}{(n_1-1)!} \sum_{k=1}^{n_2-n_1} \frac{\dot{\mu}^{kd}(-1)^{n_2-n_1-k}}{(k-1)!(n_2-n_1-k)!}\\=&
				\frac{d}{\dot{\mu}^{n_2d+1}} \frac{(n_2-1)!}{(n_1-1)!} 
				\frac{(n_2-n_1-1)!}{(n_2-n_1-1)!}
				\sum_{k=1}^{n_2-n_1} \frac{\dot{\mu}^{kd}(-1)^{n_2-n_1-k}}{(k-1)!(n_2-n_1-k)!}\\ 
				=&
				\frac{d}{\dot{\mu}^{n_2d+1}} \frac{(n_2-1)!}{(n_1-1)!} 
				\frac{1}{(n_2-n_1-1)!}
				\sum_{l=0}^{n_2-n_1-1}
				\binom{n_2-n_1-1}{l} {(\dot{\mu}^{d})}^{l+1} (-1)^{n_2-n_1-l-1}\\
				=&
				\frac{d}{\dot{\mu}^{(n_2-1)d+1}} \frac{(n_2-1)!}{(n_1-1)!} 
				\frac{(\dot{\mu}^d-1)^{n_2-n_1-1}}{(n_2-n_1-1)!}
				= (n_2-n_1) 
				\binom{n_2-1}{n_1-1}
				\frac{d(\dot{\mu}^d-1)^{n_2-n_1-1}}{\dot{\mu}^{(n_2-1)d+1}}
				\\
				=& 
				\frac{d(\dot{\mu}^d-1)^{n_2-n_1-1}\cdot \dot{\mu}^{-(n_2-1)d-1} }{B(n_2-n_1,n_1)}, \quad \dot{\mu}>1.
			\end{aligned}
		\end{equation*}
		In the previous derivation, we applied the following equality at the second line: $\prod_{l \neq j \atop l=1}^{n_2-n_1}\left(l-j\right)= (j-1)!(n_2-n_1-j)!(-1)^{j+1}$. Moreover, at the fourth line we applied the reflection property of the indexes of a sum: $\sum_{k=1}^{K}a_k=\sum_{k=1}^{K} a_{K-k+1}$. At the sixth line, we applied the Newton binomial formula.	
		Interestingly, we can define $Z=\dot{\mu}^d-1$ to find that 
		$Z\sim\beta'(n_2-n_1,n_1)$, where $\beta'$ denotes the \emph{Beta prime} distribution. This property helps to find the expression for the generic moment of $\dot{\mu}$:
		\begin{equation*}
			\mathbb{E}\left[\dot{\mu}^k\right] = \mathbb{E}\left[(Z+1)^{k/q}\right] = \int_{0}^{+\infty} (z+1)^{1/q} \frac{z^{n_2-n_1}(1+z)^{-n_2}}{B(n_2-n_1,n_1)}dz=\frac{B(n_2-n_1,n_1-k/d)}{B(n_2-n_1,n_1)},
		\end{equation*}
		that is well-defined for $k<dn_1.$
	\end{proof}

	\begin{remark}[1]
		Given the expression of the generic moment of $\dot{\mu}$, we can derive its expected value and variance:
		\begin{equation}
			\mathbb{E}\left[\dot{\mu}\right] = \frac{B(n_2-n_1,n_1-1/d)}{B(n_2-n_1,n_1)}\quad \text{and} \quad \mathbb{V}\left[\dot{\mu}\right]=\frac{B(n_2-n_1,n_1-2/d)}{B(n_2-n_1,n_1)}-\frac{B(n_2-n_1,n_1-1/d)^2}{B(n_2-n_1,n_1)^2},
		\end{equation}
		both well-defined when $d>2$.
	\end{remark}
	
	\begin{remark}[2]
		Formula \eqref{mudot} can be specialized to the case where $n_1=n_0$ and $n_2=2n_0$. We obtain
		\begin{equation} \label{2n}
			f_{\mu_{i,n_0,2n_0}}(\dot{\mu})= 
			\frac{(2n_0-1)!}{(n_0-1)!^2} \cdot
			\frac{d(\dot{\mu}^d-1)^{n_0-1}}{\dot{\mu}^{(2n_0-1)d+1}} = 
			\frac{d(\dot{\mu}^d-1)^{n_0-1}}{B(n_0,n_0)\cdot \dot{\mu}^{(2n_0-1)d+1}} , \quad \dot{\mu}>1.
		\end{equation}		
	\end{remark}
	
	\begin{remark}[3]
		The result at the basis of the \texttt{Cride} model in Equation \eqref{musgammas} can be derived as special case of formula \eqref{mudot}. Consequently, we can say the same for the \texttt{TWO-NN} model in Equation \eqref{MOD1}. Specifically, if we set $n_1=n_0$ and $n_2=n_0+1$, we obtain
		\begin{equation} \label{cride2}
			f_{\mu_{i,n_0,n_0+1}}(\dot{\mu})= 
			n_0 d \dot{\mu}^{-n_0d-1}, \quad \dot{\mu}>1,
		\end{equation}	
		which is the density of a $Pareto(1,n_0d)$ distribution.	
	\end{remark}
	
	The distributions reported in Equations \eqref{mudot} and \eqref{2n} allow us to devise a novel estimator for the $\id$ parameter, based on the properties of the distances measured between a point and two of its NNs of generic order. We name this method the Generalized Ratios $\id$ Estimator (\texttt{Gride}). We were not able 
	to derive a closed-form MLE in this case, but the estimation can be easily carried out employing one-dimensional numerical optimizing techniques. Moreover, numerical methods can be exploited for uncertainty estimation: for example, one can obtain the 
	estimated confidence intervals with parametric bootstrap.
	We display some examples of the shapes of the density functions defined in Equation \eqref{mudot} in Figure \ref{fig:mudots}.\\
	\begin{figure}[ht!]
		\begin{center}
			\includegraphics[scale=.5]{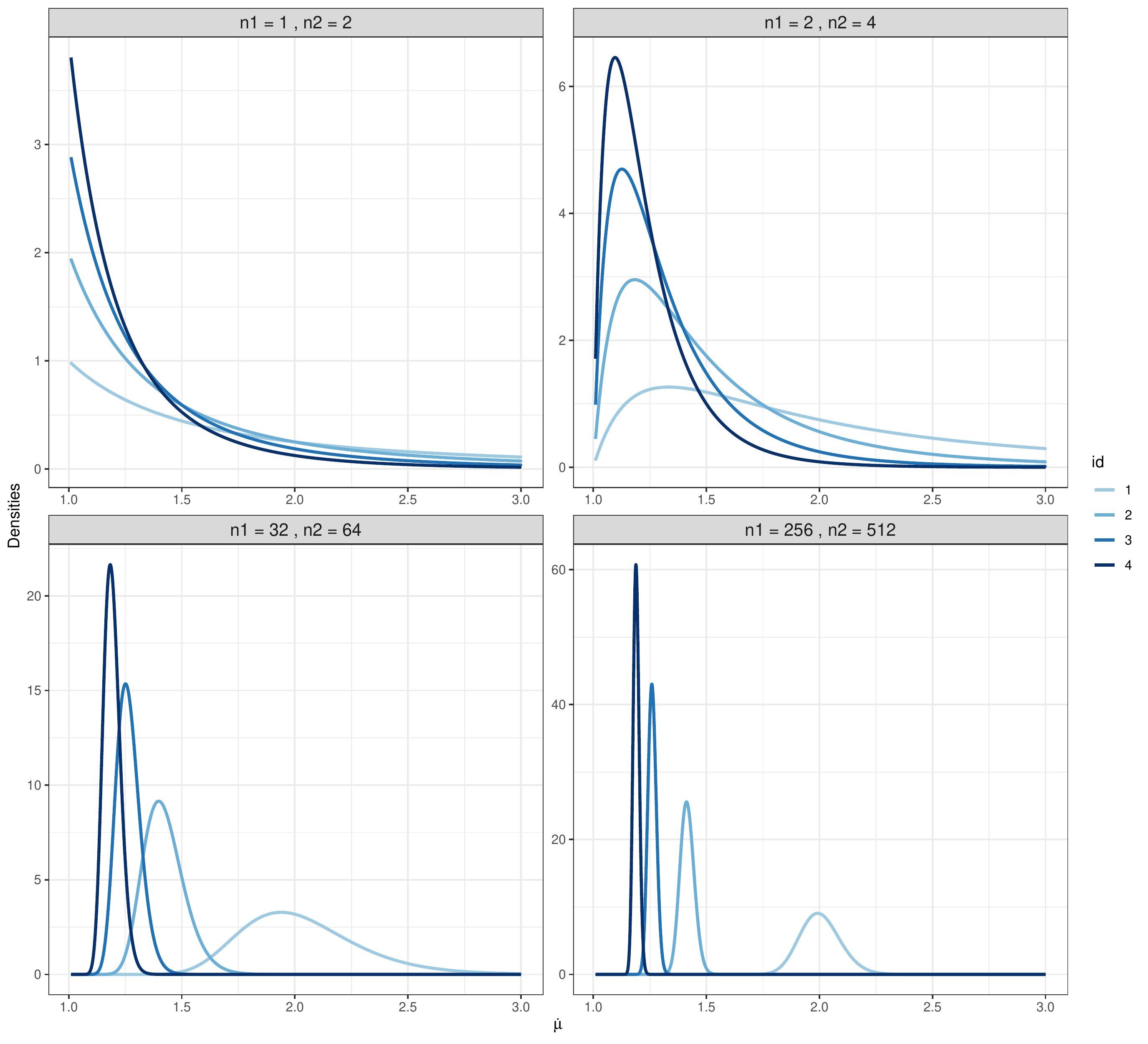}
		\end{center}
		\caption{Examples of density functions characterizing the random variable $\dot{\mu}$ as defined in Equation \eqref{mudot}. The different colors correspond to different values of the $\id$ parameter $d$, while the panels display the various order of NNs considered for the ratios.}
		\label{fig:mudots}
	\end{figure}

	Notice that, similarly to Theorem \ref{Theo1}, Theorems \ref{Theo2} and \ref{Theo3} can be proven only assuming $\rho$ to be constant. However, from a practical perspective, the novel estimators are empirically valid as long as the density $\rho$ is approximately constant on the scale defined by the distance of the $L$-th NN $r_{i,L}$ (\texttt{Cride}) and the $n_2$-th NN $r_{i,n_2}$ (\texttt{Gride}), respectively. Again, we will refer to this assumption as local homogeneity. We provide a more detailed discussion of these assumptions in Section \ref{assumpt}. In the next subsection, we discuss the advantages of an estimator built on the results stated in Theorem \ref{Theo3}.\\
	
	We conclude this section providing another theoretical results. We derive a closed-form expression for the joint density of the random distances between a point and its first $L$ NNs for a homogeneous Poisson point process. We defer the proof of the next theorem to the Appendix.
	
	\begin{theorem}
		Consider a distance $\Delta$ taking values in $\mathbb{R}^+$ defined among the data points $\{\bm{x}_i\}_{i=1}^n$, which are a realization of a Poisson point process with constant density $\rho$. Let $r_{i,l}$ be the value of this distance between observation $i$ and its $l$-th NN. 
		Then, the joint distribution of the the vector $\left(r_{i,1},\ldots,r_{i,L}\right)$ is given by
		\begin{equation}
			f(r_{i,1},\ldots,r_{i,L})= (\rho \omega_d d)^L  \left(\prod_{l=1}^{L}r_{i,l}^{d-1} \right)\exp\left[ -\rho\omega_d r_{i,L}^d \right],
		\end{equation}
		with $r_{i,l} \in \mathbb{R}^+$, and the constraint that $r_{i,1}<r_{i,2}<\ldots<r_{i,L}$. Moreover, the marginal random distance between a point $\bm{x}_i$ and its $L$-th NN has density
		\begin{equation}
			f(r_{i,L})=\exp\left[ -\rho\omega_d r_{i,L}^d \right] (\rho \omega_d d)^L \frac{r_{i,L}^{Ld-1}}{(L-1)!d^{L-1}}.
		\end{equation}
		This result implies that, for $i=1,\ldots,n$, 
		$r_{i,L}
		\sim GenGamma(p,a,q)$, i.e., it follows a Generalized Gamma distribution with parameters $p=d$, $a=1/\sqrt[d]{\rho \omega_d}$, and $q=Ld$.	
		\label{Lemma4}
	\end{theorem}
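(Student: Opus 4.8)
The plan is to obtain the joint density by a change of variables starting from the shell volumes, whose law is already pinned down in Section~\ref{sec:modelingbackground}. Fixing the point $\bm{x}_i$ and suppressing the index $i$, recall that under a homogeneous Poisson point process the shell volumes $v_l=\omega_d(r_l^d-r_{l-1}^d)$ from \eqref{eq::HSshell}, $l=1,\ldots,L$ with $r_0=0$, are i.i.d.\ $Exp(\rho)$, so their joint density on the positive orthant is $f(v_1,\ldots,v_L)=\rho^L\exp[-\rho\sum_{l=1}^L v_l]$. First I would invert the map to express everything through the radii $(r_1,\ldots,r_L)$. The telescoping identity $\sum_{l=1}^L v_l=\omega_d(r_L^d-r_0^d)=\omega_d r_L^d$ immediately converts the exponent into $-\rho\omega_d r_L^d$, which already explains why only the outermost radius survives in the exponential factor.

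The core computation is the Jacobian of the transformation $v_l=\omega_d(r_l^d-r_{l-1}^d)$. Since each $v_l$ depends only on $r_l$ and $r_{l-1}$, with $\partial v_l/\partial r_l=\omega_d d\,r_l^{d-1}$ and $\partial v_l/\partial r_{l-1}=-\omega_d d\,r_{l-1}^{d-1}$, the Jacobian matrix is lower-triangular (a diagonal plus a single subdiagonal), so its determinant is the product of the diagonal entries, $|J|=(\omega_d d)^L\prod_{l=1}^L r_l^{d-1}$. Multiplying $f(v_1,\ldots,v_L)$ by $|J|$ and imposing the ordering $0<r_1<\cdots<r_L$ (inherited from the positivity $v_l>0\iff r_l>r_{l-1}$) yields the claimed joint density $(\rho\omega_d d)^L(\prod_l r_l^{d-1})\exp[-\rho\omega_d r_L^d]$.

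For the marginal of $r_L$, I would integrate out $r_1,\ldots,r_{L-1}$ over the region $0<r_1<\cdots<r_{L-1}<r_L$. The factors $\exp[-\rho\omega_d r_L^d]$ and $r_L^{d-1}$ pull outside, leaving $\int_{0<r_1<\cdots<r_{L-1}<r_L}\prod_{l=1}^{L-1}r_l^{d-1}\,dr_1\cdots dr_{L-1}$. The clean way to evaluate this is the substitution $u_l=r_l^d$, under which $r_l^{d-1}dr_l=du_l/d$ and the ordered region becomes $0<u_1<\cdots<u_{L-1}<r_L^d$; the integral then collapses to $d^{-(L-1)}$ times the volume of an ordered $(L-1)$-simplex, namely $d^{-(L-1)}(r_L^d)^{L-1}/(L-1)!$. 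Assembling the pieces produces the stated marginal, and a direct comparison with the Stacy parameterization of the Generalized Gamma law identifies $p=d$, $a=(\rho\omega_d)^{-1/d}$, and $q=Ld$.

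The only mild obstacle is bookkeeping rather than any genuine difficulty: one must carry the ordering constraint $0<r_1<\cdots<r_L$ attached to the joint density throughout, since it is what renders the simplex integral finite and supplies the $(L-1)!$, and one must track the normalizing constants carefully through the $u_l=r_l^d$ substitution so as not to introduce a spurious $d^L$ versus $d^{L-1}$ mismatch. I would expect the final Generalized Gamma identification --- verifying that the constant $(\rho\omega_d d)^L/((L-1)!\,d^{L-1})$ coincides with $d(\rho\omega_d)^L/(L-1)!$ --- to be the last routine consistency check rather than the crux of the argument.
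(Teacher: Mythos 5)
Your proposal is correct and follows essentially the same route as the paper's proof: both start from the i.i.d.\ $Exp(\rho)$ shell volumes, apply the change of variables $v_l=\omega_d(r_l^d-r_{l-1}^d)$ with the triangular Jacobian $(\omega_d d)^L\prod_l r_l^{d-1}$, and then integrate out the inner radii over the ordered region to obtain the marginal and the Generalized Gamma identification. The only cosmetic difference is that you evaluate the simplex integral via the substitution $u_l=r_l^d$, whereas the paper computes the iterated integral directly (and also notes a shortcut via $V_L=\omega_d r_L^d\sim Gamma(L,\rho)$); all three yield the same $r_L^{d(L-1)}/\bigl(d^{L-1}(L-1)!\bigr)$ factor.
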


	\subsection{The assumptions behind \texttt{Cride} and \texttt{Gride}}
	\label{assumpt}
	The estimators presented in the previous subsections extend the \texttt{TWO-NN} rationale to broader neighborhoods. By considering a larger number of NNs, the models consider more information regarding the topology of the data configuration. As a consequence, ratios of higher NNs orders allow the investigation of the relationship between the dataset $\id$  and the width of the neighborhood. 
	That way, we can escape the strict, extremely local point of view of the \texttt{TWON-NN}, which allows us to reduce the distortion produced by noisy observations in the estimation of the $\id$.\\
	However, to understand when the results obtained with \texttt{Cride} and \texttt{Gride} are reliable in real settings, we need to discuss the validity of the assumptions needed for their derivations. As mentioned in Section \ref{sec:modelingbackground}, the main modeling assumptions are two: the local homogeneity of the density of the underlying Poisson point process and the independence among ratios of distances centered in different data points. To provide a visual comparison, we display in Figure \ref{fig::diff_models} an example. We consider 500 points generated from a bidimensional Uniform distribution over the unit square. Then, we select four points (in blue) and highlight (in red) the NNs involved in the computation of the ratios that are used by the \texttt{TWO-NN}, \texttt{Cride}, and \texttt{Gride} models. For \texttt{Cride}, we set $L=40$. For \texttt{Gride}, $n_1=20$ and $n_2=40$.
	\begin{figure}[ht!]
		\begin{center}
			\includegraphics[width=\linewidth]{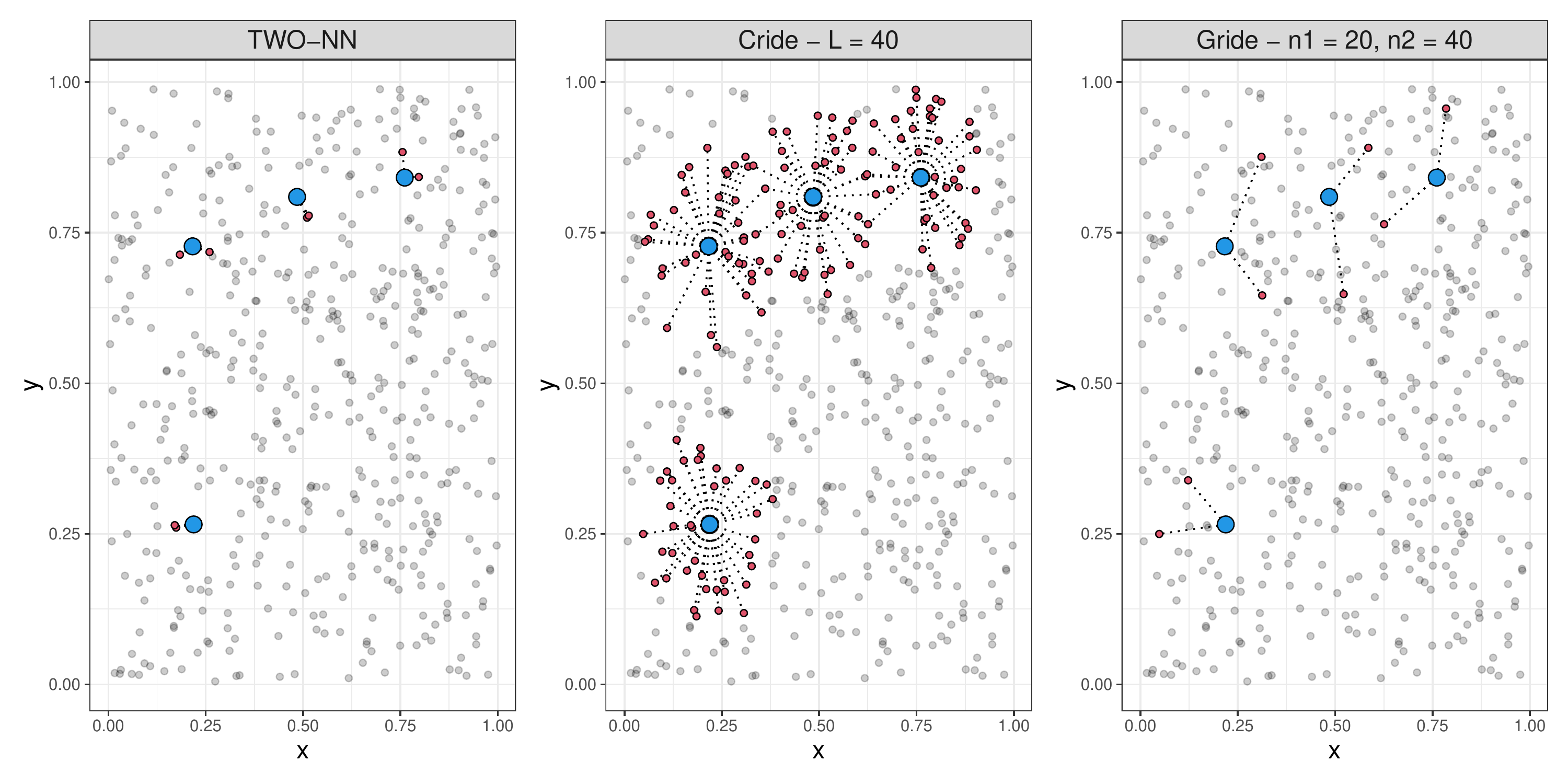}
		\end{center}	
		\caption{Neighboring points (in red) and distances (dotted lines) involved in the $\id$ estimation centered in four data points (in blue). Each panel corresponds to one model: \texttt{TWO-NN}, \texttt{Cride}, and \texttt{Gride}, respectively.}
		\label{fig::diff_models}
	\end{figure}
	
	First, we notice that in the \texttt{Cride} and \texttt{Gride} cases the local homogeneity hypothesis has to hold for larger neighborhoods, up to the NN of order $L>2$ and $n_2>2$, respectively. 
	As we will prove empirically, while the two novel estimators are more reliable than \texttt{TWO-NN} if used on dense configurations, when the data become scarce care should be used when interpreting the results. Although the stricter local homogeneity assumption affects the two novel estimators similarly, they are not equally impacted by the assumption of independence of the ratios. 
	By comparing the second and third panels of Figure \ref{fig::diff_models}, we observe that \texttt{Cride}, in its computation, needs to take into account all the distances between points and its NNs up to the $L$-th order. When $L$ is large and the sample size is limited, neighborhoods centered in different data points may overlap, inducing dependence across the ratios. \texttt{Gride} instead uses only two of the $L$ distances, and the probability of shared NNs across different data points is lower, especially if large $n_1$ and $n_2$ are chosen.

	\section{Numerical Experiments}
	\label{sec::appl}
	
	First, we empirically show that the variance of the \texttt{Gride} estimator is reduced as we consider NNs of higher order. This represents an important gain with respect to the \texttt{TWO-NN} estimator. We sample $10,000$ observations from a bivariate Gaussian distribution, and aim at estimating the true $\texttt{id}=2$. To assess the variance of the numerical estimator devised from Equation \eqref{cride2}, we resort to parametric bootstrap techniques. We collect $5,000$ simulations as bootstrap samples under four different scenarios that we report in the first row of Figure \ref{fig:bootstr}. A similar analysis can be performed within the Bayesian setting, studying the concentration of the posterior distribution. We display the posterior simulations on the second row of the same figure. We see that, as the NNs order increases, 
	the bootstrap samples (top row) and the posterior samples (bottom row)
	are progressively more concentrated around the truth, 
	with  minor remaining bias due to the lack of perfect homogeneity in the data generating process. \\
	\begin{figure}[t!]
		\begin{center}
			\includegraphics[scale=.4]{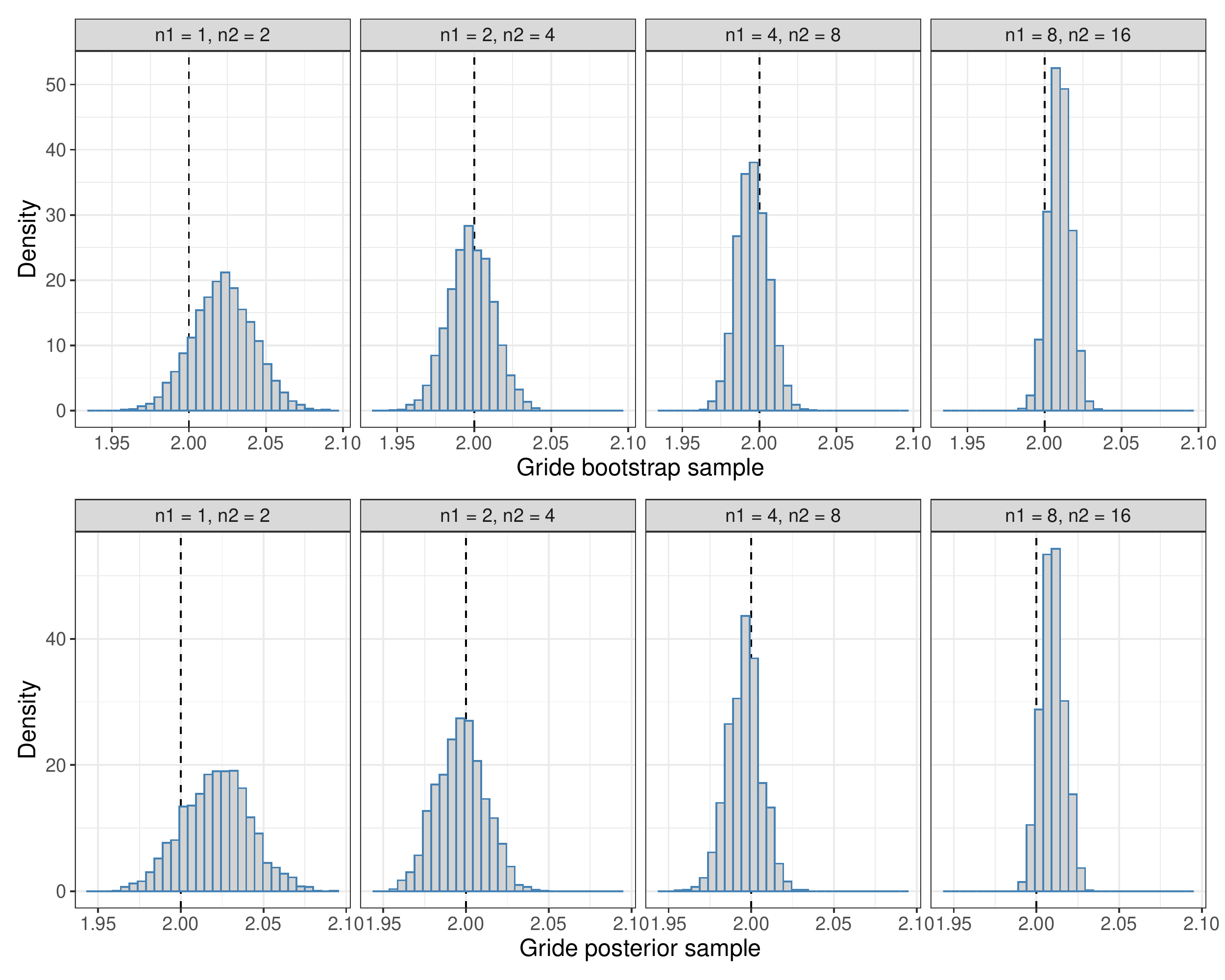}
		\end{center}
		\caption{Histograms of the parametric bootstrap samples (top row) and posterior samples (bottom row) for the \texttt{Gride} models estimated withing the frequentist and Bayesian framework. The panels in the first column correspond to the \texttt{TWO-NN} model.}
		\label{fig:bootstr}
	\end{figure}
	
	As a second analysis, we empirically show that high-order \texttt{Gride} estimates are also less biased than the ones obtained with the \texttt{TWO-NN} model when the homogeneity assumption of the underlying Poisson process holds. 
	In general, we cannot expect this assumption to be perfectly met in real datasets due to density variations, boundary effects, and noise in the observations. However, it is essential to develop estimators that can perform well in a reference ideal condition.\\
	To create a dataset that complies as much as possible with the theoretical data-generating mechanism, we start by fixing a pivot point and we generate a sequence of $N=40,000$ volumes of hyperspherical shells from an Exponential distribution, under the homogeneous Poisson process framework. Let us denote the sequence of these volumes with $\left\{v_j \right\}_{j=1}^{N}$. Once the volumes are collected, we  compute the actual distance (radius) from the pivot point by using Equation \eqref{eq::HSshell} with $d=2$ and $r_0=0$. To exemplify, we have $r_1=\sqrt{v_1/\omega_2}\:$, $r_2=\sqrt{(v_1+v_2)/\omega_2}$, and so on. For each $j$, we generate the position of the $j$-th point at distance $r_j$ from the pivot by sampling its angular coordinates from a uniform distribution with support $\left[0; 2\pi\right)$. \\
	\begin{figure}[ht!]
		\begin{center}
			\includegraphics[scale=.35]{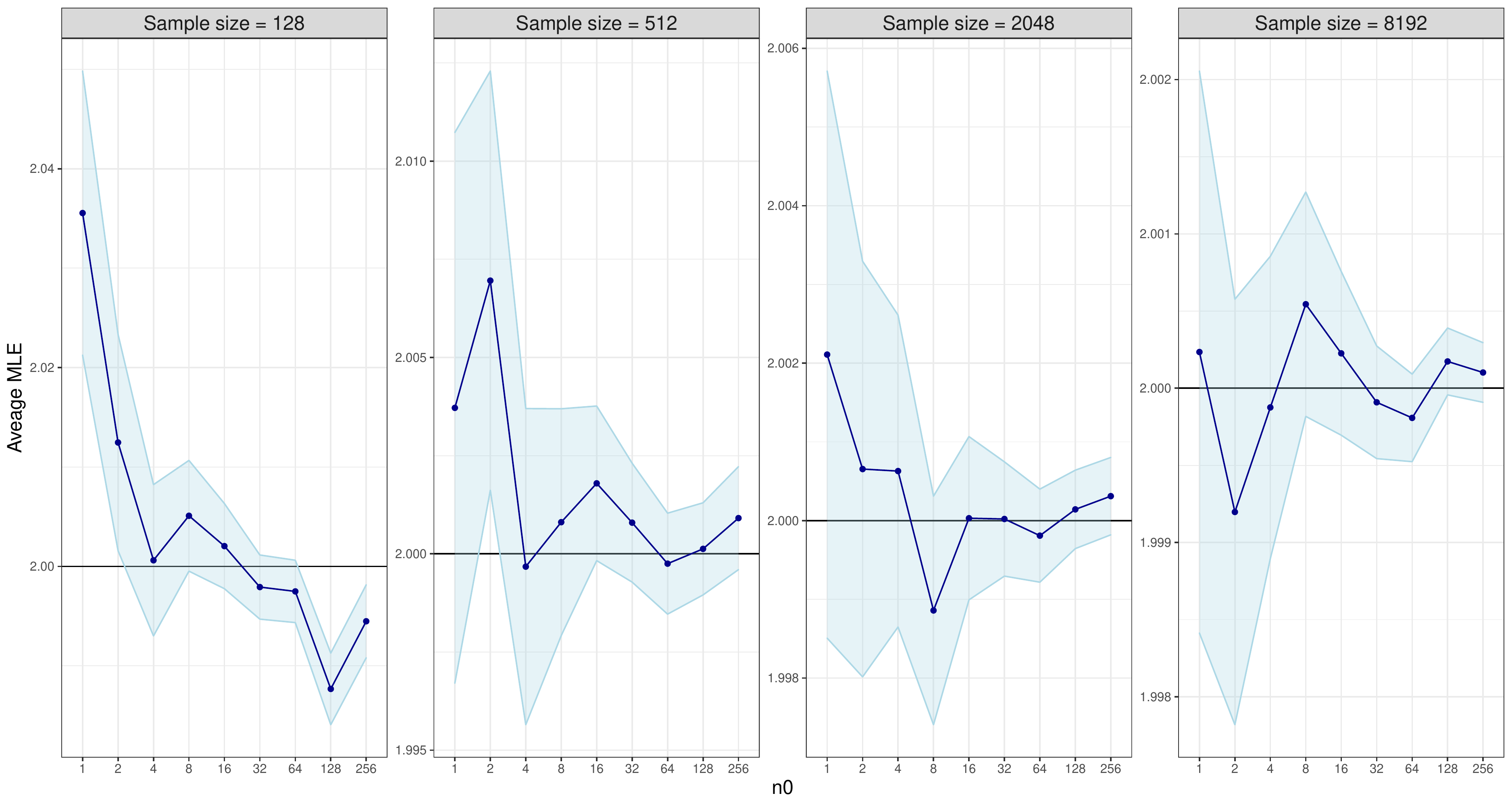}
		\end{center}
		\caption{Average MLE for the Gride models obtained over different NN orders. The different panels showcase different sample sizes considered to compute the estimates. The error bands display the $95\%$ confidence interval on the average MLE.}
		\label{fig:uniform_bootstrap}
	\end{figure}
	The panels in Figure \ref{fig:uniform_bootstrap} showcase the \texttt{id} estimates as a function of the closest $j$ NNs to the pivot, where  $j \in \{128, 512, 2048, 8192\}$. We employ different NN orders keeping the ratio ${n_2}/{n_1}=2$ fixed and we increase geometrically $n_1$ from $1$ to $256$ ($x$-axis). In this experiment, the \texttt{id} is estimated via maximum likelihood on 1,000 repeated samples. Given the sample of 1,000 estimates $\hat{d}$ we compute its average with its 95\% confidence intervals. 
	The first three panels show a small but consistent bias for the \texttt{id} estimated with $n_1=1$ (\texttt{TWO-NN}) and $n_1=2$. 
	The most viable explanation for the behavior of the estimator at small $n_1$ is the statistical correlation: the $\dot{\mu}$'s entering in the likelihood (see Equation \eqref{mudot}) are computed on nearby points, and, as a consequence, they cannot be considered purely independent realizations.  Remarkably, this correlation effect is greatly reduced when larger values of $n_1$ are considered. 
	On the other hand, the small bias we may observe at large NN orders is instead likely due to numerical error accumulation. Recall that the radii of the produced points are obtained from the sum of $l$ volumes sampled from a homogeneous Poisson process. Given the data generating mechanism we used, the statistical error might compound across the different stages.\\
	
	Despite these possible drawbacks, we can state that the family of \texttt{Gride} models provide a reliable set of estimators and an effective strategy to study how the \texttt{id} changes with the scale. 
	This second aspect is relevant for many applications. The authors in \citet{Facco} showed that a scale-dependent analysis of the \texttt{id} is essential to identify the correct number of relevant directions in noisy data and propose to decimate the dataset to increase the typical distance involved in the estimate. Instead of discarding precious information from our dataset, we here propose to apply a sequence of \texttt{Gride} models on the \emph{entire dataset} to explore larger regions: the higher $n_1$, the larger is the average neighborhood size analyzed.\\
	To investigate the impact of the scale on the \texttt{id} estimates, we simulate 50,000 data points from a two-dimensional Gaussian distribution and perturb them with orthogonal Gaussian white noise. 
	We compare the results obtained under two cases: one-dimensional (1D) and twenty-dimensional (20D) noise; in both cases, the perturbation variance is set to $\sigma^2 = 1e-4$. 
	Specifically, we estimate the \texttt{id} of the dataset with several \texttt{Gride} models by changing the ratio $n_{2, 1} = {n_2}/{n_1}$ of the order of the nearest neighbors used to compute $\dot{\mu}$. 
	The results are shown in Figure \ref{fig:gauss_crime_grime}. On the $x$-axis we report the mean neighbor distance computed as $\bar{r} = (r_{n_2}+r_{n_1})/2$, averaged over all the observations.
	\begin{figure}[th!]
		\begin{center}
			\includegraphics[scale=.45]{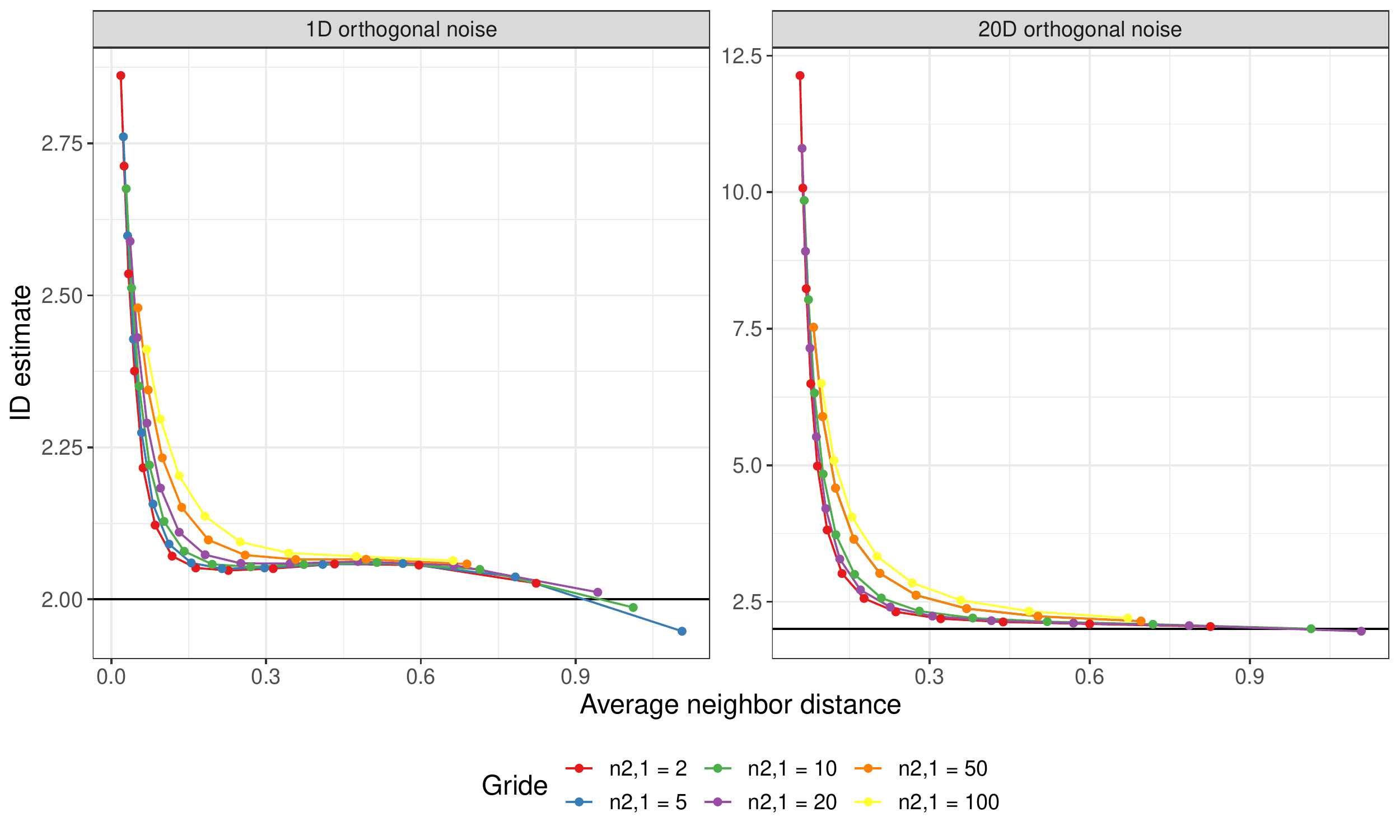}
		\end{center}
		\caption{Analysis of the impact of the scale on the \texttt{id} estimates for different \texttt{Gride} models performed on a 2D noisy Gaussian dataset. The \texttt{id} is calculated maximizing the likelihood of Equation \eqref{mudot}; the errorbars computed with the Fisher information are smaller than the marker size.}
		\label{fig:gauss_crime_grime}
	\end{figure}
	
	When $\bar{r}$ is of the same order as $\sigma$,  the \texttt{id} estimated by the \texttt{Gride} models is much higher than 2, the true value. For instance, when $\bar{r} \approx \sigma$, the geometry of the neighborhoods is approximately 3-dimensional, and, consistently, \texttt{id} $\approx 2.85$ (left panel of Figure \ref{fig:gauss_crime_grime}). 
	As we increase the range of distances involved in the estimate, all the models display a plateau around \texttt{id} $\approx 2$. However, when $n_{2, 1} =2$, the \texttt{id} stabilizes around two at smaller scales for low and high dimensional noise. Indeed, the left panel shows that  $\texttt{id} \approx 2.1$ at $\bar{r} \approx 0.08$ for $n_{2, 1} =2$, at $\bar{r} \approx 0.14$ for $n_{2, 1} =20$ and 
	$\bar{r} \approx 0.18$ for $n_{2, 1} = 50$. Similarly, in the right panel $\texttt{id} \approx 2.2$ at $\bar{r} \approx 0.2$ for $n_{2, 1} =2$, at $\bar{r} \approx 0.45$ for $n_{2, 1} =20$ and
	$\bar{r} \approx 0.6$ for $n_{2, 1} = 50$.
	A broader plateau makes it easier to identify the number of relevant directions present in the dataset. 
	Therefore, our numerical experiments suggest that the choice $n_{2, 1} =2$ is the most appropriate in practical applications.
	
	\begin{figure}[ht!]
		\begin{center}
			\includegraphics[scale=.45]{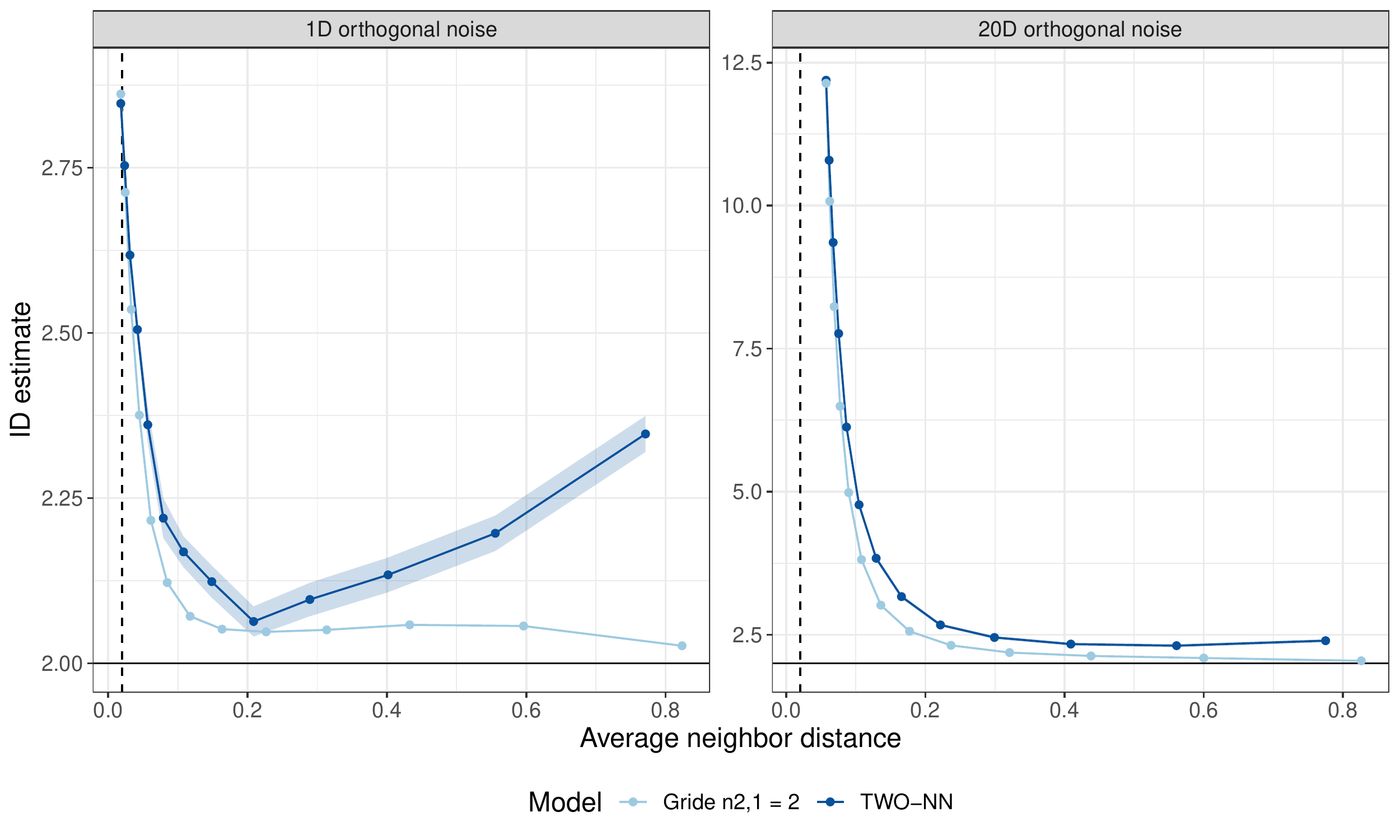}
		\end{center}
		\caption{Analysis of the impact of the scale on the \texttt{id} estimates comparing \texttt{Gride} models of order ratios $n_{2,1}=2$ vs. the \texttt{TWO-NN} estimator performed on a 2D noisy Gaussian dataset. The error bounds ($\pm 2$ std. dev.) are visible only for one model.}
		\label{fig:gauss_twonn_grime}
	\end{figure}
	Figure \ref{fig:gauss_twonn_grime} focuses on the comparison of the scale analysis done with \texttt{TWO-NN} and \texttt{Gride} with $n_{2, 1} =2$ on the same dataset. 
	Following \citet{Facco} we applied the \texttt{TWO-NN} estimator on several subsets of the original data and report the average \texttt{id} with its 95\% confidence intervals. 
	Both in the case of high and low dimensional noise \texttt{Gride} settles down to around 2 at smaller scales than the \texttt{TWO-NN} estimator. 
	The left panel also shows that the decimation protocol of \texttt{TWO-NN} can introduce a bias at large scales when the size of the replicates becomes small. 
	In our experiment, by halving the sample size at each decimation step, we use subsets with 12 datapoints when $\bar{r} \approx 0.8$. At a comparable scale, \texttt{Gride} performs much better since we always maximize the likelihood using all of the original 50,000 data points. 
	
	\section{Discussion}
	\label{Sec::Concl}
	In this paper, we introduced and developed novel distributional results concerning the homogeneous Poisson point process related with the estimate of the \texttt{id}, which is a crucial quantity for many dimensionality reduction techniques. The results extend the theoretical framework of the \texttt{TWO-NN} estimator. In detail, we derived closed-form density functions for the ratios of distances between a point and its nearest neighbours, ranked in increasing order.\\  
	
	The distributional results not only have a theoretical value per se but are also useful to improve the model-based estimation of the \texttt{id}. Specifically, we have derived two estimators: \texttt{Cride} and \texttt{Gride}. The first one builds on the independence of the elements of the vector $\{\mu_{i,l}\}_{l=1}^L$, which we exploit to derive a closed-form estimator with lower variance than the \texttt{TWO-NN}.  
	However, considering multiple ratios of distances for each point in the sample can lead to cases that violate the assumed independence of the vector of ratios $\{\bm{\mu}_{i,L}\}_{i=1}^n$, which allows us to express the likelihood as a product of marginal distributions. To mitigate this issue, we have proposed \texttt{Cride}, an estimator based on NNs of generic order. We showed that the latter estimator is also more robust to the presence of noise in the data.\\The main potential drawback of these two novel estimators when compared to \texttt{TWO-NN} is that the inclusion of NNs of higher orders has to be accompanied by stronger assumptions on the homogeneity of the density of the data-generating process. 
	Nonetheless, by dedicated computational experiments, we have shown that the assumption of homogeneity of the Poisson point process can be weakened. Indeed, given a specific point in the configuration, the homogeneity should only hold up to the scale of the distance of the furthest nearest neighbor entering the estimator.\\
	
	To summarize, when dealing with real data we face a trade-off between the assumptions of homogeneity and independence. On the one hand, the \texttt{TWO-NN} is more likely to respect the local homogeneity hypothesis but is extremely sensitive to measurement noise since it only involves a narrow neighborhood of each point. On the other hand, \texttt{Cride} focuses on broader neighborhoods, which makes it more robust to noisy data but also imposes a stronger local homogeneity requirement. It is also more likely to induce dependencies among different sequences of ratios. 
	We believe that \texttt{Gride} provides a reliable alternative to the previous two estimators, being both robust to noise and more likely to comply with the independence assumptions.\\

	
	The results in this paper pave the way for many other possible research avenues.
	First, we have implicitly assumed the existence of a single manifold of constant \texttt{id}. However, it is reasonable to expect that a complex dataset can be characterized by multiple latent manifolds with heterogeneous $\id$s. \citet{Allegra} extended the \texttt{TWO-NN} model in this direction by proposing \texttt{Hidalgo}, a tailored mixture of Pareto distributions to partition the data points into clusters driven by different $\id$ values. It would be interesting to combine the \texttt{Hidalgo} modeling framework with our results, where the distributions in Equations \eqref{musgammas} and \eqref{mudot} can replace the Pareto mixture kernels.
	Second, the estimators derived from the models do not directly consider any source of error in the observed sample. Although we showed how one can reduce the bias generated by this shortcoming by considering higher-order nearest neighbours that allow escaping the local distortions, we are still investigating how to address this issue more broadly. For example, a simple solution would be to model the measurement errors at the level of the ratios, accounting for a gaussian noise that can distort each $\mu_i$. 
	A more promising solution to this problem may be given by the Generalized Gamma distribution derived in Theorem \ref{Lemma4}.
	By focusing directly on the distribution of the distances between data points in an ideal, theoretical setting, we can obtain informative insights on how to best model the measurement noise.

	
	
	\bibliographystyle{plainnat} 
	\bibliography{IDbiblio}       

	\section{Appendix}
	\subsection{Proof of Theorem \ref{Lemma4}}
	\begin{proof}
		To simplify the notation, let us drop the subscript $i$. Recall that we were able to prove that $f(v_1,\ldots,v_L)=\prod_{l=1}^{L}f(v_l)$, where $f(v_l)=\rho\exp(-\rho v_l)$, meaning that $v_l\stackrel{i.i.d.}{\sim}Exp(\rho)$. We consider the following one-to-one transformation for $l=1,\ldots,L$:
		$$ r_l = \left( \frac{\sum_{k=1}^{l}v_k}{\omega_d} \right)^{1/d} \iff v_l = \omega_d \left(r_l^d-r_{l-1}^d\right). $$
		The determinant of the Jacobian of this transformation is
		$|J|= (\omega_d d)^L\prod_{l=1}^{L}r_l^{d-1}$.
		Thus, the distribution of the first $L$ distances has density:
		
		$$ f(r_1,\ldots,r_L)= (\rho \omega_d d)^L  \left(\prod_{l=1}^{L}r_l^{d-1} \right)\exp\left[ -\rho\omega_d r_L^d \right], $$
		with $r_l \in \mathbb{R}^+$ and the constraint that $r_1<r_2<\ldots<r_L$.\\
		We can also derive the marginal distribution of the generic distance $r_L$. This can be easily done by repeatedly integrating out the smallest distance $r_l$ over $\left(0,r_{l+1}\right)$, $l=1,\ldots,L-1$. In formulas:
		
		\begin{align*}
			f(r_L)&=(\rho \omega_d d)^Lr_L^{d-1}\exp\left[ -\rho\omega_d r_L^d \right]\int_{0}^{r_{L}}\int_{0}^{r_{L-1}}\cdots\int_{0}^{r_{2}}  \left(\prod_{l=1}^{L-1}s_l^{d-1} \right) ds_1 \cdots ds_{L-1}\\
			&=\exp\left[ -\rho\omega_d r_L^d \right] (\rho \omega_d d)^L \frac{r_L^{Ld-1}}{(L-1)!d^{L-1}}.
		\end{align*}
		We conclude that the generic distance from a point to its $L$-th NN follows a Generalized Gamma distribution, whose density is given by
		$$f(x) ={\frac  {p/a^{q}}{\Gamma (q/p)}}x^{{q-1}}e^{{-(x/a)^{p}}},\quad\quad x,a,p,q>0. $$
		Therefore, $f(r_L)$ is a Generalized Gamma density with parameters $p=d,\quad a=\frac{1}{\sqrt[d]{\rho \omega_d}},\quad q=Ld$.	
		There is another, faster way to recover this last result.
		Since $v_l\sim Exp(\rho)$ for each $l=1,\ldots,L$, it is easy to see that the volume of the hyper-sphere of radius $r_L$, defined as $V_L=\sum_{l=1}^{L}v_l=\omega_d r_L^d$ follows an Erlang distribution: $V_L\sim Gamma(L,\rho)$.
		Then,
		$$ V_L\sim Gamma(L,\rho) \iff r_L^{d} = \frac{V_L}{\omega_d}\sim Gamma(L,\omega_d\rho) \iff r_L\sim GenGamma\left(d,\frac{1}{\sqrt[d]{\rho \omega_d}},Ld\right). $$
	\end{proof}

\end{document}